\documentclass[lettersize,journal]{IEEEtran}

\usepackage{amsmath,amsfonts}
\usepackage{amsthm}
\usepackage{algorithmic}
\usepackage{algorithm}
\usepackage{array}
\usepackage[caption=false,font=footnotesize]{subfig}
\usepackage{textcomp}
\usepackage{stfloats}
\usepackage{url}
\usepackage{verbatim}
\usepackage{graphicx}
\usepackage{cite}
\usepackage{float}
\usepackage{booktabs}
\usepackage[table]{xcolor}
\usepackage{multirow}
\usepackage{threeparttable}
\usepackage{makecell}

\newtheorem{theorem}{Theorem}
\newtheorem{assumption}{Assumption}
\newtheorem{lemma}[theorem]{Lemma}
\newtheorem{proposition}[theorem]{Proposition}

\begin{document}

\title{P-GADMM: Parallel Group-Based ADMM for Asynchronous Optimization in Heterogeneous Edge Networks}

\author{Gaiguo Wei,
        Qingying Zhang,
        Heqiang Wang,
        Yu Zhang,
        and Xiaoxiong Zhong%
\thanks{This work was supported by Peng Cheng Laboratory Project under Grant PCL2025A13.}
\thanks{Gaiguo Wei and Qingying Zhang are with the Southern University of Science and Technology, Shenzhen 518055, China, and also with Pengcheng Laboratory, Shenzhen 518000, China (e-mail: 12447019@mail.sustech.edu.cn; 12547037@mail.sustech.edu.cn).}%
\thanks{Heqiang Wang and Xiaoxiong Zhong are with Pengcheng Laboratory, Shenzhen 518000, China (e-mail: wanghq02@pcl.ac.cn; xixzhong@gmail.com).}%
\thanks{Yu Zhang is with the Department of Computer Science and Engineering, Southern University of Science and Technology, Shenzhen 518055, China (e-mail: zhangy7@sustech.edu.cn).} 
\thanks{Corresponding authors: Yu Zhang and Xiaoxiong Zhong.}}
\maketitle

\begin{abstract}
The Alternating Direction Method of Multipliers (ADMM) is widely used for distributed optimization, but its synchronous implementation can suffer from efficiency loss in heterogeneous edge networks, where fast clients or groups need to wait for slower ones before global updates can be completed. Existing group-based ADMM methods reduce communication overhead through grouping, but their grouping rules usually focus on data similarity or network topology and do not explicitly account for computation heterogeneity. To address this issue, this paper proposes Parallel Group-Based ADMM (P-GADMM) for distributed optimization in heterogeneous edge networks. P-GADMM forms computation-aware edge groups according to client computational capabilities and local data sizes, which reduces training-speed variation within each group. It further combines edge-level aggregation with bounded asynchronous coordination at the cloud, allowing active groups to participate in global updates without waiting for slower groups while controlling stale group information through a delay threshold. For strongly convex group objectives, we establish convergence guarantees for an idealized form of P-GADMM under bounded group-level staleness, showing a time-averaged convergence behavior up to a staleness-induced asymptotic error neighborhood. Experiments show that P-GADMM reduces wall-clock training time compared with representative baselines while maintaining comparable final accuracy.
\end{abstract}

\begin{IEEEkeywords}
Distributed optimization, alternating direction method of multipliers (ADMM), edge learning, heterogeneous edge networks, bounded-delay asynchronous optimization.
\end{IEEEkeywords}

\section{Introduction}
\label{sec:introduction}

\IEEEPARstart{D}{istributed} optimization has been widely used for machine learning in large-scale distributed systems \cite{Nedic2009, Boyd2011, Duchi2012}. In edge networks, data are generated and stored at many geographically distributed clients, and model training is usually carried out through local computation together with global coordination under either a parameter server architecture or a consensus architecture \cite{Dean2012, Tsianos2012, Duan2023}. As these systems continue to expand, communication overhead becomes a major bottleneck. Frequent transmission of raw data or global model updates consumes bandwidth, increases latency, and raises privacy and regulatory concerns. For this reason, reducing communication cost has become an important issue in distributed learning \cite{Zhang2013, McMahan2017, Kairouz2021, Elgabli2020GADMM}. This has led to collaborative learning frameworks in which raw data remain local and only model information is exchanged.

Hierarchical cloud-edge-client architectures provide a natural structure for collaborative learning in edge networks. The client layer performs local computation, the edge layer coordinates groups of nearby or logically related clients, and the cloud layer maintains the global model. From an optimization viewpoint, the global objective is decomposed into local objectives handled by clients, while the cloud server performs global aggregation and maintains model consistency. A major difficulty in this setting is statistical heterogeneity. Local datasets are usually collected under different environments and operating conditions, so the overall data distribution is typically non-IID. As a result, local updates may deviate from the global optimum and introduce aggregation bias, which slows convergence and degrades the final model \cite{Li2020FedProx}.

The Alternating Direction Method of Multipliers (ADMM) serves as a fundamental framework for solving distributed optimization problems with consensus constraints \cite{Boyd2011}. It decomposes the global objective into local subproblems and enforces consistency through an augmented Lagrangian. Because of this structure, ADMM can balance local optimization and global consensus in a principled way. By explicitly coupling local variables through the consensus constraint, ADMM can help control the mismatch among local models and maintain stable convergence even when local data distributions differ across clients \cite{He2025}. This makes ADMM attractive for heterogeneous distributed learning scenarios.

However, applying ADMM in heterogeneous edge networks remains difficult. Clients may differ substantially in processing capability, available resources, and communication conditions because of hardware diversity, changing resource availability, and competing background workloads. In the standard distributed implementation of ADMM, global updates are usually performed synchronously. The central server must wait until all clients, including stragglers, finish their local computation before the next update starts. During this process, faster clients remain idle while slower clients continue computing, so the overall training speed is limited by the slowest client. This causes inefficient resource usage, extra waiting time, and lower system throughput. As a result, standard ADMM is difficult to use in edge networks with strict latency requirements \cite{He_StragglerResilient, Xu2024}. More broadly, this reflects a basic challenge in distributed optimization: coordination overhead should be reduced without weakening convergence reliability \cite{Nedic2018, Elgabli2020GADMM}.

Although grouping and asynchronous optimization have been studied to alleviate synchronization delay, they do not fully address the computation heterogeneity in edge networks. Grouping methods usually do not explicitly consider differences in client processing speed, while asynchronous methods need additional control over stale information. Therefore, an effective method should improve training efficiency while preserving the stability needed for reliable convergence.

\begin{figure}[ht!]
\centering
\includegraphics[width=0.48\linewidth]{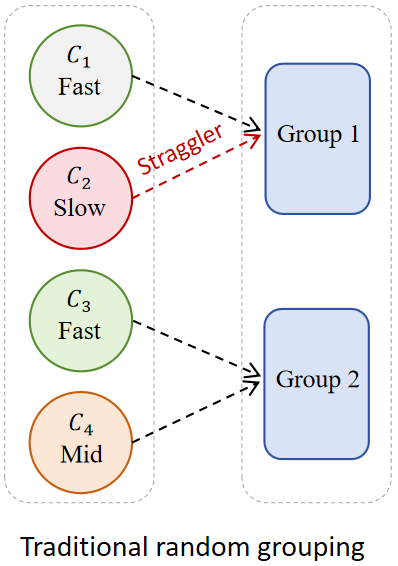}
\hfill
\includegraphics[width=0.48\linewidth]{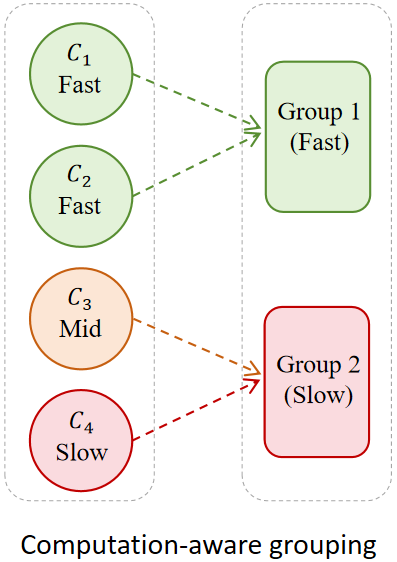}
\caption{Comparison of grouping mechanisms. In traditional random grouping (left), fast clients may wait for slow ones, which creates a straggler bottleneck. In the proposed strategy (right), clients are grouped according to computational capability and local data size, which reduces intra-group speed variation and synchronization delay.}
\label{fig_1}
\end{figure}

In this paper, we propose parallel group-based ADMM (P-GADMM) for heterogeneous edge networks. Unlike conventional grouping methods determined by topology or data characteristics, P-GADMM organizes clients according to computational capability and local data size, as illustrated in Fig.~\ref{fig_1}. It also combines grouping with bounded asynchronous coordination, so faster groups can continue updating without being blocked by slower ones. The main contributions of this work are summarized as follows:

\begin{enumerate}
    \item We propose P-GADMM for distributed optimization over a hierarchical cloud-edge-client architecture. The algorithm constructs computation-aware edge groups based on each client's computational capability and local data size, and combines aggregation at the edge layer with bounded-delay asynchronous coordination at the cloud. This design can reduce synchronization delay caused by heterogeneous client computation while keeping stale group information under an explicit delay bound.
    
   \item We analyze the convergence of an idealized version of P-GADMM under strongly convex objectives and bounded staleness. The analysis shows how delayed global model information introduces a bounded perturbation into the group optimality condition. Using a Lyapunov function, we derive a descent inequality and obtain a time-averaged bound with an $\mathcal{O}(1/T)$ decay term up to an $\mathcal{O}(\tau_{\max}^2)$ error neighborhood.
    
  \item We evaluate the performance of P-GADMM in heterogeneous edge networks and compare it with other baselines under different heterogeneity settings. The results show that P-GADMM improves convergence efficiency by reducing wall-clock training time and waiting overhead, while maintaining comparable final accuracy. 
\end{enumerate}

The rest of this paper is organized as follows. Section~\ref{sec:related_work} reviews related work. Section~\ref{sec:system_model} presents the system architecture and problem formulation. Section~\ref{sec:proposed_algorithm} describes the proposed P-GADMM algorithm. Section~\ref{sec:convergence} gives the convergence analysis. Section~\ref{sec:experiments} presents the experimental results. Section~\ref{sec:conclusion} concludes the paper.

\section{Related Work}
\label{sec:related_work}

\subsection{Distributed Optimization in Edge Networks}

Edge data are usually generated and stored across a large number of clients. This has led to distributed optimization methods in which clients perform local computation and exchange model information with servers, rather than uploading raw data. To reduce the communication between clients and the cloud, a common approach is to adopt hierarchical architectures with intermediate aggregation servers. For example, hierarchical frameworks \cite{Liu2020HFL} and heterogeneous edge aggregation schemes \cite{Abad2020} reduce direct communication between clients and the cloud. Another line of work uses client selection to reduce synchronization delay \cite{Shi2023, Tian2022}. In these methods, only a subset of clients participates in each iteration, so the waiting time caused by slow clients can be reduced.

Although these methods improve training efficiency, they still have limitations in heterogeneous edge networks. When slow clients are repeatedly excluded, the global model may become biased, especially if these clients hold distinctive local data distributions. Some studies further consider resource and data heterogeneity by adjusting local computation settings, such as adaptive local epoch sizes \cite{Yao2023MSN}. However, these methods mainly focus on communication and scheduling. In edge networks with strong heterogeneity, a more basic issue is how to preserve reliable global consensus while maintaining system efficiency. This issue motivates the use of optimization methods designed for consensus-constrained problems.

\subsection{ADMM and Grouped Distributed Optimization}

ADMM is a standard method for distributed optimization with consensus constraints. Because it decomposes a global problem into local subproblems and enforces consistency through dual variables, it has been used in many distributed learning and network optimization problems. In edge networks with limited resources, existing studies have extended ADMM in several directions. Representative examples include stochastic formulations with random data sampling \cite{Ouyang_StochasticADMM} and accelerated solvers for large-scale datasets \cite{Wang_FastADMM}. ADMM has also been used in specific distributed learning tasks, including distributed linear classification \cite{Zhang_EfficientDist}. In addition, some studies consider inexact splitting, where local subproblems are solved approximately rather than exactly, in order to reduce the computation burden on local clients \cite{Zhou_InexactADMM, Kant_ThreeOpADMM}. This usually improves local efficiency, but it may also increase the number of communication rounds before convergence.

Another way to reduce global communication overhead is to organize clients into groups. Group-based ADMM (GADMM) \cite{Wang2017GADMM} is a representative example, where clients with similar local variables are grouped and coordination is performed at the group level. Later studies extended this idea to different communication structures. Ring-based GADMM \cite{Elgabli2020GADMM} and quantized variants \cite{Elgabli2020QGADMM} restrict parameter exchange to neighboring clients. Layer-wise training schemes \cite{Elgabli2020LFGADMM}, grouped ring structures \cite{Huang2021GRADMM}, and bipartite graph models \cite{BenIssaid2022} further divide the global coordination process into smaller units. These methods improve communication efficiency through structured coordination.

However, most grouped ADMM methods still rely on strict synchronization. Moreover, grouping is usually determined by topology or data characteristics, rather than by differences in computing speed. As a result, computational imbalance may still remain within a group, and one slow client may still delay the entire group. Existing studies therefore do not fully combine the consensus structure of ADMM with grouping strategies that explicitly reflect system heterogeneity.

\subsection{Asynchronous Optimization and Straggler Mitigation}

Asynchronous updates are commonly used to reduce the waiting time introduced by synchronous coordination. Early work by Zhang et al. \cite{Zhang2014Async} and Chang et al. \cite{Chang2016Async} showed that distributed optimization can still converge under bounded delay, even when stale global variables are used. Based on this idea, later studies further considered distributed momentum methods \cite{Pond2026}, straggler-resilient consensus algorithms \cite{He_StragglerResilient}, and online learning systems \cite{Chen2020Async}. Other approaches mitigate stragglers at the system level. Gradient coding \cite{Tandon2017} mitigates straggler effects by replicating data partitions across clients, so that the server can recover the exact gradient from the earliest responses. Caching-based methods store asynchronous updates in the cloud and classify clients according to historical response times \cite{Wu2021SAFA}. More recent work has also considered hierarchical ADMM architectures for dynamic node failures \cite{Azimi2025} and asynchronous grouping methods for shared-memory multi-core clusters \cite{Zhou2018HAG}.

These methods still do not fully match the requirements of geographically distributed edge networks. Shared-memory methods \cite{Zhou2018HAG} rely on high-speed internal data buses and are not suitable for clients connected through wide-area networks. Redundancy-based methods such as gradient coding \cite{Tandon2017} introduce extra computation because overlapping data partitions must be processed repeatedly, which is difficult to sustain on energy-constrained edge devices. Fully asynchronous methods also provide only limited control over stale updates, which makes stable consensus harder to maintain in practice.

These observations suggest that an effective method for heterogeneous edge networks should combine the consensus structure of ADMM, a grouping strategy that reflects computational heterogeneity, and asynchronous communication with explicit delay control. The proposed algorithm is developed along this direction.

\section{System Model and Problem Formulation}
\label{sec:system_model}

In this section, we introduce the system model and the optimization problem considered. We consider a heterogeneous edge network with a three-layer architecture, namely, a client layer, an edge group layer, and a cloud server layer. This architecture is adopted to reflect the practical structure of edge networks, where local computation is carried out at clients, intermediate coordination is handled by edge groups, and global aggregation is performed by the cloud server. Since clients may differ significantly in computational capability, the group layer is used to reduce the impact of slow clients on the overall training process.

\begin{table}[!t]
\centering
\renewcommand{\arraystretch}{1.05}
\caption{Summary of Main Notations}
\label{tab:notations}
\footnotesize
\begin{tabular}{>{\raggedright\arraybackslash}p{0.9cm} >{\raggedright\arraybackslash}p{6.5cm}}
\toprule
\textbf{Notation} & \textbf{Description} \\
\midrule
$M$, $G$ & numbers of clients and groups \\
$\mathcal{G}_g$ & set of clients in group $g$ \\
$m_g$ & number of clients in group $g$ \\
$d$ & model dimension \\
$C_i$ & computational capability of client $i$ \\
$D_i$ & local dataset of client $i$ \\
$|D_i|$ & size of local dataset $D_i$ \\
$T_i$ & estimated training latency of client $i$ \\
$w^k$ & global model at iteration $k$ \\
$w_g^k$ & group model of group $g$ at iteration $k$ \\
$w_g^{k,i}$ & local model of client $i$ in group $g$ at iteration $k$ \\
$\bar{w}_g^k$ & empirical average of local client updates in group $g$ \\
$F(w)$ & global objective function \\
$f_i(w)$ & local objective function of client $i$ \\
$F_g(w)$ & group objective function of group $g$ \\
$\xi_i$ & random data sample of client $i$ \\
$\lambda_g^k$ & dual variable of group $g$ at iteration $k$ \\
$\rho$ & penalty parameter \\
$\mathcal{L}_\rho$ & augmented Lagrangian \\
$k$ & global iteration index \\
$T$ & total number of iterations \\
$d_g(k)$ & index of the global model used by group $g$ at iteration $k$ \\
$\tau_g(k)$ & staleness of group $g$ at iteration $k$ \\
$\tau_{\max}$ & maximum allowable staleness \\
$\zeta$ & learning rate in local SGD \\
$A(k)$ & active set of groups at iteration $k$ \\
\bottomrule
\end{tabular}
\end{table}

\subsection{Computation-Aware Grouping Strategy}
\label{subsec:grouping_strategy}

We first introduce the grouping strategy. In conventional hierarchical methods, clients are often grouped randomly or according to topology. Such strategies do not directly consider differences in computational capability, so fast clients and stragglers may still be placed in the same group. Under synchronous group aggregation, the update speed of that group is then determined by its slowest client. This leads to unnecessary waiting time for faster clients and weakens the advantage of parallel execution.

To reduce this effect, we adopt a static computation-aware grouping strategy. The basic idea is simple: clients with similar processing speeds are assigned to the same group. In this way, the variation of training time within each group can be reduced.

Let \( C_i \) denote the computational capability of client \( i \), such as CPU cycles per second. Let \( D_i \) denote the local dataset of client \( i \), and let \( |D_i| \) denote its size. We define the estimated local training latency per epoch as
\[
T_i = \frac{|D_i|}{C_i}.
\]
Based on \(\{T_i\}_{i=1}^M\), the full set of \(M\) clients is partitioned into \(G\) disjoint groups \(\{\mathcal{G}_1,\dots,\mathcal{G}_G\}\).

The grouping procedure has three steps. First, the cloud server collects \(C_i\) and \(D_i\) from all participating clients and computes the estimated training latency \(T_i\). Second, the clients are sorted in ascending order of \(T_i\), which gives an ordered sequence \(\{i_1,i_2,\dots,i_M\}\) satisfying
\[
T_{i_1} \le T_{i_2} \le \cdots \le T_{i_M}.
\]
Third, this ordered sequence is divided into \(G\) contiguous segments of nearly equal size, and each segment forms one group.

With this grouping strategy, clients in the same group have closer training speeds. As a result, intra-group synchronization delay can be reduced, and the efficiency of group updates can be improved.

\subsection{Distributed Optimization Formulation}
\label{subsec:problem_formulation}

Based on the grouping configuration \(\{\mathcal{G}_g\}_{g=1}^G\), we now define the distributed optimization problem. The total number of participating clients is
\[
M = \sum_{g=1}^G |\mathcal{G}_g|,
\]
where \(m_g = |\mathcal{G}_g|\) denotes the number of clients in group \(g\).

Let \(w \in \mathbb{R}^d\) denote the global model parameter, and let \(w_g \in \mathbb{R}^d\) denote the group-level model parameter associated with group \(g\). For each client \(i \in \mathcal{G}_g\), let \(w_g^{k,i} \in \mathbb{R}^d\) denote the corresponding local model parameter at iteration \(k\).

For client \(i\), the local objective is defined as the expected loss over its local data distribution:
\begin{equation}
f_i(w) = \mathbb{E}_{\xi_i \in D_i}\left[f_i\left(w;\xi_i\right)\right],
\label{eq:local_loss}
\end{equation}
where \(\xi_i\) is a random sample drawn from the local dataset \(D_i\).

The global objective is defined as the average of all local objectives:
\begin{equation}
F(w) = \frac{1}{M}\sum_{i=1}^M f_i(w)
= \sum_{g=1}^G \frac{m_g}{M}F_g(w),
\label{eq:global_loss}
\end{equation}
where
\[
F_g(w) = \frac{1}{m_g}\sum_{i\in \mathcal{G}_g} f_i(w)
\]
denotes the aggregated objective of group \(g\).

To apply ADMM in this hierarchical model, we introduce group-level copies \(\{w_g\}_{g=1}^G\) and enforce global consensus at the cloud server. The original optimization problem can then be written as
\begin{equation}
\begin{aligned}
\underset{w, \{w_g\}_{g=1}^G}{\text{minimize}} \quad &
\sum_{g=1}^G \frac{m_g}{M}F_g(w_g), \\
\text{subject to} \quad &
w_g = w,\quad \forall g = 1,\dots,G.
\end{aligned}
\label{eq:dist_opt}
\end{equation}

\begin{figure}[t]
    \centering
    \includegraphics[width=1.0\linewidth]{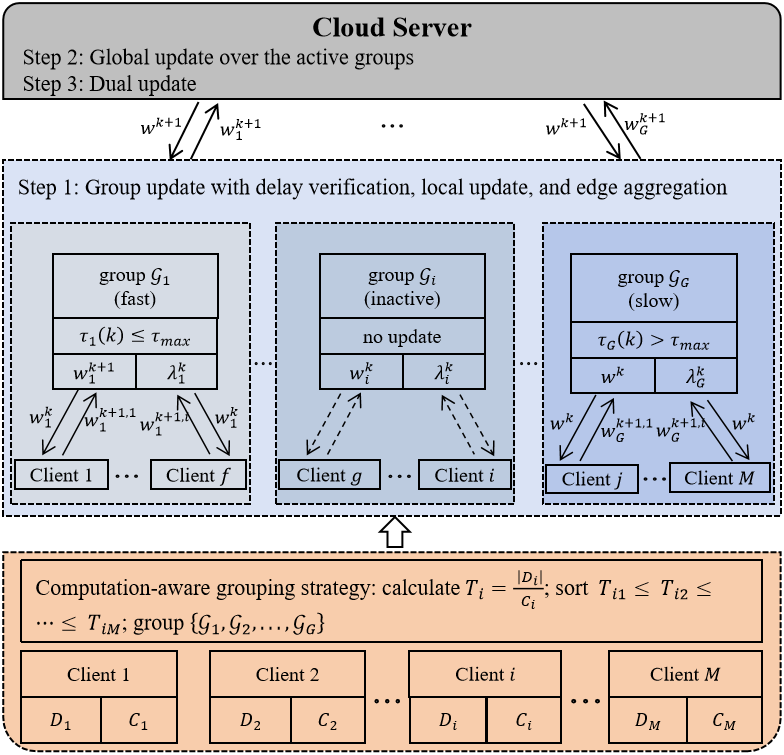}
   \caption{Workflow of one P-GADMM iteration in the hierarchical cloud-edge-client architecture. Clients are partitioned into edge groups according to the estimated local training latency \(T_i=|D_i|/C_i\). Active groups perform delay verification, local computation, and edge aggregation, whereas inactive groups keep variables unchanged. The cloud server then updates the global and dual variables using the current group variables.}
    \label{fig_2}
\end{figure}

\section{Proposed P-GADMM Algorithm}
\label{sec:proposed_algorithm}

In this section, we introduce the proposed P-GADMM algorithm. To solve the constrained optimization problem in \eqref{eq:dist_opt} without strict global synchronization, P-GADMM combines synchronized computation within each group and asynchronous coordination across groups. In the implemented algorithm, each active group performs local training in parallel within the group and then sends its aggregated update to the cloud server. For the convergence analysis, this practical group update is represented by an idealized ADMM subproblem.

With dual variables \(\lambda_g \in \mathbb{R}^d\) and penalty parameter \(\rho>0\), the augmented Lagrangian is given by
\begin{equation}
\mathcal{L}_\rho
=
\sum_{g=1}^G
\left[
\frac{m_g}{M}F_g(w_g)
+
\langle \lambda_g, w_g-w\rangle
+
\frac{\rho}{2}\|w_g-w\|^2
\right].
\label{eq:lagrangian}
\end{equation}

At iteration \(k+1\), P-GADMM has three main steps: group update, global update, and dual update.

\subsection{Step 1: Group Update}

At iteration \(k\), let \(A(k)\subseteq \{1,\dots,G\}\) denote the active set, which contains the groups whose updates are received by the cloud server and used to construct \(w^{k+1}\). Only groups in \(A(k)\) perform group update at iteration \(k\). For \(g\notin A(k)\), group \(g\) does not contribute a new update to the current global update, and its previous group variables remain unchanged. The group partition \(\{\mathcal{G}_1,\dots,\mathcal{G}_G\}\) is fixed after computation-aware grouping, whereas \(A(k)\) may vary across iterations.

\subsubsection{Delay Verification}

For each active group \(g\in A(k)\), the edge server first verifies the staleness of its cached global reference before local computation. Let \(d_g(k)\) denote the iteration index of the latest global model cached at the edge server and used as the reference for group \(g\) at iteration \(k\). The corresponding staleness is defined as
\begin{equation}
\tau_g(k)=k-d_g(k).
\label{eq:staleness}
\end{equation}

Here, \(w_g^k\) denotes the cached reference model maintained at the edge server of group \(g\). If \(\tau_g(k)\le \tau_{\max}\), the cached reference is used directly. Otherwise, the edge server updates the cached reference using the current global model from the cloud server:
\begin{equation}
w_g^k \leftarrow w^k,
\qquad
d_g(k)\leftarrow k.
\label{eq:delay_reset}
\end{equation}
After verification, local computation in group \(g\) starts from the reference model \(w_g^k\). This step ensures that every active group performs local computation using a global reference whose staleness is bounded by \(\tau_{\max}\).

\subsubsection{Client-Side Local Update}

After delay verification, each active group \(g\in A(k)\) performs synchronized local training. Specifically, client \(i\in\mathcal{G}_g\) starts from the current group model \(w_g^k\) and performs a local SGD step:
\begin{equation}
w_g^{k+1,i}
=
w_g^k-\zeta \nabla f_i(w_g^k;\xi_i),
\label{eq:client_local_update}
\end{equation}
where \(\zeta>0\) is the learning rate and \(\xi_i\) denotes a mini-batch sampled from the local dataset. If multiple local steps are used, \eqref{eq:client_local_update} is applied repeatedly.

\subsubsection{Edge Aggregation}

After local training, the edge server aggregates the client models in group \(g\) as
\begin{equation}
\bar{w}_g^{k+1}
=
\frac{1}{m_g}\sum_{i\in\mathcal{G}_g} w_g^{k+1,i}.
\label{eq:empirical_average}
\end{equation}
For active groups, the practical group update sent to the cloud server is
\begin{equation}
w_g^{k+1}=\bar{w}_g^{k+1},
\qquad
g\in A(k).
\label{eq:active_group_update}
\end{equation}
For inactive groups, no new local computation is triggered in the current round, and the previous group variable is carried forward:
\begin{equation}
w_g^{k+1}=w_g^k,
\qquad
g\notin A(k).
\label{eq:inactive_group}
\end{equation}

\subsubsection{Analytical Model for Convergence Analysis}

For the convergence analysis, we introduce an idealized group update that preserves the ADMM consensus structure. For each active group \(g\in A(k)\), the group variable is defined as the exact solution of the following delayed ADMM subproblem:
\begin{equation}
\begin{aligned}
w_g^{k+1}
&=
\arg\min_{w_g}
\Bigg[
\frac{m_g}{M}F_g(w_g)
+
\big\langle
\lambda_g^{d_g(k)},\, w_g-w^{d_g(k)}
\big\rangle \\
&\qquad\qquad
+
\frac{\rho}{2}\|w_g-w^{d_g(k)}\|^2
\Bigg].
\end{aligned}
\label{eq:intra_group_aggregation}
\end{equation}
This idealized subproblem provides the analytical reference used in the convergence proof.

\begin{algorithm}[t]
\caption{Parallel Group-Based ADMM (P-GADMM)}
\label{alg:pgadmm}
\begin{algorithmic}[1]
\REQUIRE Datasets $\mathbf{D}=\{D_1,\dots,D_M\}$, learning rate $\zeta$, penalty parameter $\rho$, delay bound $\tau_{\max}$, total number of global iterations $T$, client capabilities $\{C_i\}$
\ENSURE Final global model $w^T$

\STATE \textbf{Initialization:} Set $k\leftarrow 0$. Initialize $w^0$, $w_g^0$, and $\lambda_g^0\leftarrow \mathbf{0}$ for all groups.
\STATE \textbf{Grouping:} Partition the $M$ clients into $G$ disjoint groups $\{\mathcal{G}_g\}$ according to the estimated training latency.
\STATE Set $d_g(0)\leftarrow 0$ for all groups.

\WHILE{$k<T$ and global convergence is not achieved}

    \STATE Determine the active set $A(k)$.

    \STATE \textbf{$\triangleright$ Step 1: Group Update}
    \FOR{each group $g\in A(k)$ \textbf{in parallel}}
        \STATE Compute the staleness index $\tau_g(k)\leftarrow k-d_g(k)$.
        \IF{$\tau_g(k)>\tau_{\max}$}
            \STATE Refresh the group model according to \eqref{eq:delay_reset}.
        \ENDIF
        \FOR{each client $i\in\mathcal{G}_g$ \textbf{in parallel}}
            \STATE Update the local model according to \eqref{eq:client_local_update}.
        \ENDFOR
        \STATE Aggregate the local models according to \eqref{eq:empirical_average}.
        \STATE Set $w_g^{k+1}$ according to \eqref{eq:active_group_update}.
    \ENDFOR

    \FOR{each group $g\notin A(k)$}
        \STATE Set $w_g^{k+1}$ according to \eqref{eq:inactive_group}.
    \ENDFOR

    \STATE \textbf{$\triangleright$ Step 2: Global Update}
    \STATE Update $w^{k+1}$ according to \eqref{eq:global_update_closedform}.

    \STATE \textbf{$\triangleright$ Step 3: Dual Update}
    \FOR{each group $g\in\{1,\dots,G\}$}
        \STATE Update $\lambda_g^{k+1}$ according to \eqref{eq:dual_update}.
    \ENDFOR

    \STATE $k\leftarrow k+1$.
\ENDWHILE
\end{algorithmic}
\end{algorithm}

\subsection{Step 2: Global Update}

After all group variables \(\{w_g^{k+1}\}_{g=1}^G\) are determined, the cloud server updates the global consensus variable by minimizing the augmented Lagrangian with respect to \(w\):
\begin{equation}
w^{k+1}
=
\arg\min_w
\sum_{g=1}^G
\left[
\langle \lambda_g^k,\, w_g^{k+1}-w\rangle
+
\frac{\rho}{2}\|w_g^{k+1}-w\|^2
\right].
\label{eq:global_optimization}
\end{equation}
This quadratic problem has the closed-form solution
\begin{equation}
w^{k+1}
=
\frac{1}{G}\sum_{g=1}^G
\left(
w_g^{k+1}+\frac{1}{\rho}\lambda_g^k
\right).
\label{eq:global_update_closedform}
\end{equation}

\subsection{Step 3: Dual Update}

After the global update, the cloud server updates the dual variables to penalize the discrepancy between the group variables and the new global consensus variable. For every group \(g=1,\dots,G\), the dual update is
\begin{equation}
\lambda_g^{k+1}
=
\lambda_g^k+\rho\bigl(w_g^{k+1}-w^{k+1}\bigr).
\label{eq:dual_update}
\end{equation}
This update is applied to both active and inactive groups. For inactive groups, \(w_g^{k+1}\) in \eqref{eq:dual_update} is given by the unchanged group variable defined in \eqref{eq:inactive_group}.

Algorithm~\ref{alg:pgadmm} summarizes the implemented version of P-GADMM used in the experiments. Section~\ref{sec:convergence} analyzes the idealized group update in \eqref{eq:intra_group_aggregation} as the analytical model of the implemented group step.

\section{Convergence Analysis}
\label{sec:convergence}

In this section, we analyze the idealized analytical model introduced in Section~\ref{sec:proposed_algorithm}. The purpose of this analysis is to show how bounded delay affects the convergence behavior of P-GADMM. We first derive a perturbed first-order condition for the group update. We then use this relation to construct a Lyapunov analysis and obtain a descent bound under bounded delay.

\subsection{Preliminaries}

We start with three standard assumptions.

\begin{assumption}[$\mu_g$-Strong Convexity]
\label{assu:strong_convex}
For each group \(g\), the objective function \(F_g\) is \(\mu_g\)-strongly convex. That is, there exists a constant \(\mu_g>0\) such that, for all \(x,y \in \mathrm{dom}(F_g)\),
\[
(\nabla F_g(x)-\nabla F_g(y))^{\mathrm T}(x-y)\ge \mu_g\|x-y\|^2.
\]
\end{assumption}

\begin{assumption}[Saddle Point Existence]
\label{assu:strong_duality}
There exists a saddle point
\[
\Omega^*=\bigl(w^*,\{w_g^*\}_{g=1}^G,\{\lambda_g^*\}_{g=1}^G\bigr)
\]
for the Lagrangian associated with the distributed optimization problem. This saddle point satisfies the KKT conditions, including primal consistency \(w_g^*=w^*\) for all \(g\), and dual optimality
\[
\frac{m_g}{M}\nabla F_g(w_g^*)+\lambda_g^*=0,\qquad g=1,\dots,G.
\]
\end{assumption}

\begin{assumption}[Uniformly Bounded Delay]
\label{assu:bounded_delay}
The delay
\[
\tau_g(k)=k-d_g(k)
\]
is uniformly bounded by a finite constant \(\tau_{\max}<\infty\) for all groups \(g\) and all iterations \(k\).
\end{assumption}

The optimality conditions at the saddle point will be used repeatedly in the analysis. For each group \(g\), they are
\begin{equation}
\frac{m_g}{M}\nabla F_g(w_g^*)+\lambda_g^*=0,\qquad
w_g^*=w^*,
\label{eq:kkt_local}
\end{equation}
together with the dual feasibility condition
\begin{equation}
\sum_{g=1}^G \lambda_g^*=0.
\label{eq:kkt_global_sum}
\end{equation}

\subsection{Perturbed Optimality Condition}

We first derive the first-order condition of the stale group subproblem and isolate the error introduced by delay.

\begin{lemma}[Perturbed Optimality]
\label{lemma:optimality}
For any active group \(g\in A(k)\), the update \(w_g^{k+1}\) satisfies
\begin{equation}
\frac{m_g}{M}\nabla F_g(w_g^{k+1})+\lambda_g^{k+1}+\rho(w_g^{k+1}-w^{k+1})=\xi_g^k,
\label{eq:lemma1_result}
\end{equation}
where
\begin{equation}
\xi_g^k=(\lambda_g^{k+1}-\lambda_g^{d_g(k)})+\rho(w^{d_g(k)}-w^{k+1}).
\label{eq:xi_def}
\end{equation}
\end{lemma}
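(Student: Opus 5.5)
The plan is to write the first-order optimality condition of the idealized delayed subproblem \eqref{eq:intra_group_aggregation} and then add and subtract the current quantities $\lambda_g^{k+1}$ and $w^{k+1}$. The difference is exactly the perturbation $\xi_g^k$. The argument is purely algebraic; the only analytic input is that the subproblem has a minimizer characterized by a vanishing gradient.

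First, under Assumption~\ref{assu:strong_convex}, the objective in \eqref{eq:intra_group_aggregation} is $\bigl(\tfrac{m_g}{M}\mu_g+\rho\bigr)$-strongly convex. We take $F_g$ to be differentiable, as the gradient notation in Assumption~\ref{assu:strong_convex} and \eqref{eq:kkt_local} already presumes. Hence the minimizer $w_g^{k+1}$ exists, is unique, and is characterized by the stationarity condition
\begin{equation*}
\frac{m_g}{M}\nabla F_g(w_g^{k+1})+\lambda_g^{d_g(k)}+\rho\bigl(w_g^{k+1}-w^{d_g(k)}\bigr)=0 .
\end{equation*}
If $F_g$ were only subdifferentiable, one would replace $\nabla F_g$ by a suitable subgradient, and the rest is unchanged. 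Because the subproblem is unconstrained over $\mathbb{R}^d$, no normal-cone term appears.

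Second, we add $\lambda_g^{k+1}+\rho(w_g^{k+1}-w^{k+1})$ to both sides and regroup. The left-hand side becomes $\frac{m_g}{M}\nabla F_g(w_g^{k+1})+\lambda_g^{k+1}+\rho(w_g^{k+1}-w^{k+1})$. The right-hand side becomes
\begin{equation*}
\bigl(\lambda_g^{k+1}-\lambda_g^{d_g(k)}\bigr)+\rho\bigl(w_g^{k+1}-w^{k+1}\bigr)-\rho\bigl(w_g^{k+1}-w^{d_g(k)}\bigr).
\end{equation*}
The $w_g^{k+1}$ terms cancel, leaving $(\lambda_g^{k+1}-\lambda_g^{d_g(k)})+\rho(w^{d_g(k)}-w^{k+1})$, which is exactly $\xi_g^k$ as defined in \eqref{eq:xi_def}. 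This gives \eqref{eq:lemma1_result}. Note that the dual update \eqref{eq:dual_update} is not needed to establish the identity. It only matters later, where it would let one rewrite $\lambda_g^{k+1}+\rho(w_g^{k+1}-w^{k+1})$, but the lemma as stated holds for any value of $\lambda_g^{k+1}$.

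The only delicate point is a consistency check rather than a technical obstacle. The subproblem must be read with the delayed pair $(\lambda_g^{d_g(k)},w^{d_g(k)})$ consistently, and the delay-reset step \eqref{eq:delay_reset} must not change the argument. When a reset occurs, $d_g(k)=k$ and the same identity holds verbatim. Assumption~\ref{assu:bounded_delay} is not needed here; it enters only when $\|\xi_g^k\|$ is bounded in subsequent results.
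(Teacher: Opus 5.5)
Your proposal is correct and matches the paper's proof: write the stationarity condition of the delayed subproblem \eqref{eq:intra_group_aggregation}, then add and subtract $\lambda_g^{k+1}$ and $\rho w^{k+1}$ to isolate $\xi_g^k$. Your extra remarks are accurate refinements of the same argument: strong convexity gives existence and uniqueness of the minimizer, and neither the dual update \eqref{eq:dual_update} nor Assumption~\ref{assu:bounded_delay} is needed for this identity.
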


\begin{IEEEproof}
For an active group \(g\in A(k)\), the analytical update is defined by
\begin{equation}
\begin{split}
w_g^{k+1}
=
\arg\min_{w_g}
\bigg[
\frac{m_g}{M}F_g(w_g)
+ \langle \lambda_g^{d_g(k)},\, w_g-w^{d_g(k)}\rangle \\
+ \frac{\rho}{2}\|w_g-w^{d_g(k)}\|^2
\bigg].
\end{split}
\end{equation}
The first-order optimality condition of this convex subproblem is
\begin{equation}
\frac{m_g}{M}\nabla F_g(w_g^{k+1})+\lambda_g^{d_g(k)}+\rho(w_g^{k+1}-w^{d_g(k)})=0.
\end{equation}
Add and subtract \(\lambda_g^{k+1}\) and \(w^{k+1}\). Then
\begin{equation}
\begin{split}
&\frac{m_g}{M}\nabla F_g(w_g^{k+1})+\lambda_g^{k+1}+\rho(w_g^{k+1}-w^{k+1}) \\
&\qquad =(\lambda_g^{k+1}-\lambda_g^{d_g(k)})+\rho(w^{d_g(k)}-w^{k+1}),
\end{split}
\end{equation}
which is exactly \eqref{eq:lemma1_result} with \(\xi_g^k\) defined in \eqref{eq:xi_def}.
\end{IEEEproof}

\subsection{Lyapunov Difference}

We define the Lyapunov function as
\begin{equation}
V^k
=
\sum_{g=1}^G\left(\frac{1}{\rho}\|\lambda_g^k-\lambda_g^*\|^2+\rho\|w^k-w^*\|^2\right).
\end{equation}
Since only the active groups update their local variables at iteration \(k\), the dual variation only appears on \(A(k)\). The primal part is associated with the shared global model \(w\). Therefore, the Lyapunov difference can be written as
\begin{equation}
\begin{split}
\Delta V^k
&=
\sum_{g\in A(k)} \frac{1}{\rho}
\Big(
\|\lambda_g^{k+1}-\lambda_g^*\|^2
-
\|\lambda_g^{k}-\lambda_g^*\|^2
\Big) \\
&\quad
+
\rho G
\left(
\|w^{k+1}-w^*\|^2-\|w^{k}-w^*\|^2
\right) \\
&=
\Delta V_{\mathrm{dual}}+\Delta V_{\mathrm{primal}}.
\end{split}
\end{equation}

We first expand the dual term. For each \(g\in A(k)\),
\begin{align}
\|\lambda_g^{k+1}-\lambda_g^*\|^2
&=
\left\|(\lambda_g^{k+1}-\lambda_g^k)+(\lambda_g^k-\lambda_g^*)\right\|^2 \nonumber\\
&=
\|\lambda_g^{k+1}-\lambda_g^k\|^2+\|\lambda_g^k-\lambda_g^*\|^2 \nonumber\\
&\quad +2\langle \lambda_g^{k+1}-\lambda_g^k,\lambda_g^k-\lambda_g^*\rangle.
\end{align}
Hence,
\begin{align}
&\|\lambda_g^{k+1}-\lambda_g^*\|^2-\|\lambda_g^k-\lambda_g^*\|^2 \nonumber\\
&=
\|\lambda_g^{k+1}-\lambda_g^k\|^2
+2\langle \lambda_g^{k+1}-\lambda_g^k,\lambda_g^k-\lambda_g^*\rangle.
\end{align}

Rewrite \(\lambda_g^k\) in the inner product as
\(\lambda_g^{k+1}-(\lambda_g^{k+1}-\lambda_g^k)\). Then
\begin{alignat}{1}
&\langle \lambda_g^{k+1}-\lambda_g^k,\lambda_g^k-\lambda_g^*\rangle \nonumber\\
&=
\langle \lambda_g^{k+1}-\lambda_g^k,(\lambda_g^{k+1}-\lambda_g^*)-(\lambda_g^{k+1}-\lambda_g^k)\rangle \nonumber\\
&=
\langle \lambda_g^{k+1}-\lambda_g^k,\lambda_g^{k+1}-\lambda_g^*\rangle
-
\|\lambda_g^{k+1}-\lambda_g^k\|^2.
\end{alignat}
Substituting this identity yields
\begin{align}
&\|\lambda_g^{k+1}-\lambda_g^*\|^2-\|\lambda_g^k-\lambda_g^*\|^2 \nonumber\\
&=
2\langle \lambda_g^{k+1}-\lambda_g^k,\lambda_g^{k+1}-\lambda_g^*\rangle
-
\|\lambda_g^{k+1}-\lambda_g^k\|^2.
\end{align}

Using the dual update \eqref{eq:dual_update}, we obtain
\begin{equation}
\Delta V_{\mathrm{dual}}^g
=
2\langle w_g^{k+1}-w^{k+1},\lambda_g^{k+1}-\lambda_g^*\rangle
-
\rho\|w_g^{k+1}-w^{k+1}\|^2.
\label{eq:dual_expansion_final}
\end{equation}

\subsection{Strong Convexity Coupling}

We next combine the perturbed optimality condition with strong convexity. From Assumption~\ref{assu:strong_convex},
\begin{equation}
\langle \nabla F_g(w_g^{k+1})-\nabla F_g(w^*),\,w_g^{k+1}-w^*\rangle
\ge
\mu_g\|w_g^{k+1}-w^*\|^2.
\end{equation}
Using Lemma~\ref{lemma:optimality}, the identity
\[
\nabla F_g(w^*)=-\frac{M}{m_g}\lambda_g^*
\]
from \eqref{eq:kkt_local}, and multiplying both sides by \(\frac{m_g}{M}\), we have
\begin{equation}
\begin{split}
&\Big\langle
\xi_g^k-\lambda_g^{k+1}-\rho(w_g^{k+1}-w^{k+1})+\lambda_g^*,\,
w_g^{k+1}-w^*
\Big\rangle \\
&\qquad \ge \frac{m_g\mu_g}{M}\|w_g^{k+1}-w^*\|^2.
\end{split}
\end{equation}
Rearranging gives
\begin{equation}
\label{eq:primal_bound}
\begin{aligned}
&\bigl\langle \lambda_g^{k+1}-\lambda_g^*,\; w_g^{k+1}-w^* \bigr\rangle
\\
&\le
- \frac{m_g\mu_g}{M}\|w_g^{k+1}-w^*\|^2
- \rho\bigl\langle w_g^{k+1}-w^{k+1},\; w_g^{k+1}-w^* \bigr\rangle
\\
&\quad + \bigl\langle \xi_g^k,\; w_g^{k+1}-w^* \bigr\rangle.
\end{aligned}
\end{equation}

Substituting \eqref{eq:primal_bound} into \eqref{eq:dual_expansion_final}, we obtain
\begin{equation}
\begin{split}
&2\langle w_g^{k+1}-w^{k+1},\lambda_g^{k+1}-\lambda_g^*\rangle
-\rho\|w_g^{k+1}-w^{k+1}\|^2 \\
&=
2\langle w_g^{k+1}-w^*,\lambda_g^{k+1}-\lambda_g^*\rangle \\
&\quad
-2\langle w^{k+1}-w^*,\lambda_g^{k+1}-\lambda_g^*\rangle
-\rho\|w_g^{k+1}-w^{k+1}\|^2 \\
&\le
-\frac{2m_g\mu_g}{M}\|w_g^{k+1}-w^*\|^2
-2\rho\langle w_g^{k+1}-w^{k+1},\,w_g^{k+1}-w^*\rangle \\
&\quad
+2\langle \xi_g^k,\,w_g^{k+1}-w^*\rangle
-\rho\|w_g^{k+1}-w^{k+1}\|^2 \\
&\quad
-2\langle w^{k+1}-w^*,\lambda_g^{k+1}-\lambda_g^*\rangle.
\end{split}
\label{eq:cross_term_expansion}
\end{equation}

\noindent\textit{Remark on the global error term.}
The last term in \eqref{eq:cross_term_expansion} is a coupling term between the global model error and the dual error. Summing over all groups removes this term:
\begin{equation*}
\begin{split}
&\sum_{g=1}^G \langle w^{k+1}-w^*,\,\lambda_g^{k+1}-\lambda_g^*\rangle \\
&\qquad =
\Bigg\langle
w^{k+1}-w^*,\,
\sum_{g=1}^G \lambda_g^{k+1}-\sum_{g=1}^G \lambda_g^*
\Bigg\rangle
=0.
\end{split}
\end{equation*}
This follows from \eqref{eq:kkt_global_sum} and the corresponding dual feasibility relation at iteration \(k+1\).

\subsection{Primal Difference and Cross Terms}

We next bound the primal part of the Lyapunov difference.

\begin{proposition}[Bound on Global Primal Difference]
The global primal variation satisfies
\begin{equation}
\Delta V_{\mathrm{primal}}
\le
\rho\sum_{g=1}^G \|w_g^{k+1}-w^*\|^2
-
\rho G \|w^k-w^*\|^2.
\end{equation}
\end{proposition}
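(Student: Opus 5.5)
The plan is to reduce the statement to a single convexity inequality for the global update. By definition $\Delta V_{\mathrm{primal}}=\rho G\bigl(\|w^{k+1}-w^*\|^2-\|w^k-w^*\|^2\bigr)$, so the $-\rho G\|w^k-w^*\|^2$ terms on both sides agree. It therefore suffices to show
\[
G\,\|w^{k+1}-w^*\|^2\le \sum_{g=1}^G \|w_g^{k+1}-w^*\|^2 .
\]
I will obtain this in two steps: first show that the dual variables sum to zero along the iterates, then apply Jensen's inequality to the resulting plain average.

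\emph{Step 1: the dual variables sum to zero.} I will prove by induction that $\sum_{g=1}^G\lambda_g^k=0$ for all $k\ge 0$. This is the ``corresponding dual feasibility relation at iteration $k+1$'' that the preceding remark already relies on.
\begin{itemize}
\item The base case holds because Algorithm~\ref{alg:pgadmm} initializes $\lambda_g^0=\mathbf{0}$.
\item For the inductive step, sum the dual update \eqref{eq:dual_update} over all $g$; this is legitimate because \eqref{eq:dual_update} is applied to active and inactive groups alike. This gives
\[
\sum_g\lambda_g^{k+1}=\sum_g\lambda_g^k+\rho\sum_g w_g^{k+1}-\rho G\,w^{k+1}.
\]
\item Substituting the closed form \eqref{eq:global_update_closedform} gives $\rho G\,w^{k+1}=\rho\sum_g w_g^{k+1}+\sum_g\lambda_g^k$. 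Hence the right-hand side collapses to $0$.
\end{itemize}
This identity holds for any previous value of $\sum_g\lambda_g^k$, so strictly the induction is only needed to cover $k=0$.

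\emph{Step 2: Jensen's inequality.} With $\sum_g\lambda_g^k=0$, the global update \eqref{eq:global_update_closedform} reduces to $w^{k+1}=\frac1G\sum_g w_g^{k+1}$. Subtracting $w^*$ gives $w^{k+1}-w^*=\frac1G\sum_g(w_g^{k+1}-w^*)$. Convexity of $\|\cdot\|^2$ (equivalently, Cauchy--Schwarz) then yields $\|w^{k+1}-w^*\|^2\le\frac1G\sum_g\|w_g^{k+1}-w^*\|^2$. Multiplying by $\rho G$ and subtracting $\rho G\|w^k-w^*\|^2$ gives the claimed bound.

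\emph{Main obstacle.} The only delicate point is Step 1. The stale quantities $\lambda_g^{d_g(k)}$ and $w^{d_g(k)}$ enter only the group subproblem \eqref{eq:intra_group_aggregation}, never the cloud-side updates. I must therefore confirm that \eqref{eq:global_update_closedform} and \eqref{eq:dual_update} use the current $\lambda_g^k$ for every group, including inactive ones. If they do, the telescoping cancellation is exact regardless of delays and active sets, and the rest of the argument is routine.
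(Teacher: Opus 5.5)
Your proof is correct and follows the paper's route: use $\sum_g\lambda_g^k=0$ to reduce \eqref{eq:global_update_closedform} to the plain average $w^{k+1}=\frac{1}{G}\sum_g w_g^{k+1}$, then apply Jensen's inequality and multiply by $\rho G$. Your Step~1 goes further than the paper, which only asserts ``the corresponding summation relation'': you derive it by summing \eqref{eq:dual_update} and substituting \eqref{eq:global_update_closedform}, and you correctly note that $\lambda_g^0=\mathbf{0}$ is needed only for $k=0$.
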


\begin{IEEEproof}
From \eqref{eq:global_update_closedform},
\begin{equation*}
w^{k+1}
=
\frac{1}{G}\sum_{g=1}^G w_g^{k+1}
+
\frac{1}{\rho G}\sum_{g=1}^G \lambda_g^k.
\end{equation*}
Since \(\lambda_g^0=0\) and the dual variables satisfy the corresponding summation relation, the second term vanishes. Therefore,
\begin{equation*}
w^{k+1}
=
\frac{1}{G}\sum_{g=1}^G w_g^{k+1}.
\end{equation*}
Applying Jensen's inequality to \(\|\cdot\|^2\) gives
\begin{equation}
\begin{split}
\|w^{k+1}-w^*\|^2
&=
\left\|
\frac{1}{G}\sum_{g=1}^G (w_g^{k+1}-w^*)
\right\|^2 \\
&\le
\frac{1}{G}\sum_{g=1}^G \|w_g^{k+1}-w^*\|^2.
\end{split}
\end{equation}
Multiplying both sides by \(\rho G\) gives the result.
\end{IEEEproof}

We now collect the \(\rho\)-dependent terms in \eqref{eq:cross_term_expansion}. These are
\(-\rho\|w_g^{k+1}-w^{k+1}\|^2\),
the cross term
\(-2\rho\langle w_g^{k+1}-w^{k+1},\,w_g^{k+1}-w^*\rangle\),
and the positive term from \(\Delta V_{\mathrm{primal}}\).

We decompose
\begin{equation}
w_g^{k+1}-w^*
=
(w_g^{k+1}-w^{k+1})+(w^{k+1}-w^*).
\end{equation}
Substituting this into the cross term gives
\begin{equation}
\begin{alignedat}{1}
&\langle w_g^{k+1}-w^{k+1},\,w_g^{k+1}-w^*\rangle \nonumber\\
&=
\big\langle
w_g^{k+1}-w^{k+1},\,
(w_g^{k+1}-w^{k+1})+(w^{k+1}-w^*)
\big\rangle \nonumber\\
&=
\|w_g^{k+1}-w^{k+1}\|^2
+
\langle w_g^{k+1}-w^{k+1},\,w^{k+1}-w^*\rangle.
\end{alignedat}
\end{equation}
Hence,
\begin{equation}
\begin{alignedat}{1}
&-2\rho\langle w_g^{k+1}-w^{k+1},\,w_g^{k+1}-w^*\rangle \nonumber\\
&=
-2\rho\|w_g^{k+1}-w^{k+1}\|^2
-
2\rho\langle w_g^{k+1}-w^{k+1},\,w^{k+1}-w^*\rangle.
\end{alignedat}
\end{equation}

Taking expectation over the random active set \(A(k)\) removes the mixed deviation term. If each group is selected with the same probability \(p=|A(k)|/G\), then
\begin{equation}
\begin{split}
&\mathbb{E}_{A(k)}
\bigg[
\sum_{g\in A(k)}
\langle w_g^{k+1}-w^{k+1},\,w^{k+1}-w^*\rangle
\bigg] \\
&\qquad =
p\sum_{g=1}^G
\langle w_g^{k+1}-w^{k+1},\,w^{k+1}-w^*\rangle
=0.
\end{split}
\end{equation}
This holds because \(w^{k+1}\) is the average of the \(G\) group variables, and therefore
\[
\sum_{g=1}^G (w_g^{k+1}-w^{k+1})=0.
\]

Using the proposition above, the positive primal term can be upper bounded by
\[
\rho \sum_{g=1}^G \|w_g^{k+1}-w^*\|^2,
\]
and can therefore be absorbed into the negative strong convexity term.

Next, apply Young's inequality
\[
2\langle a,b\rangle \le \frac{1}{\eta}\|a\|^2+\eta\|b\|^2
\]
to the perturbation term
\(2\langle \xi_g^k,\,w_g^{k+1}-w^*\rangle\), and choose
\(\eta=\frac{m_g\mu_g}{M}\). Then
\begin{equation}
\begin{split}
2\langle \xi_g^k,\,w_g^{k+1}-w^*\rangle
\le
\frac{M}{m_g\mu_g}\|\xi_g^k\|^2
+
\frac{m_g\mu_g}{M}\|w_g^{k+1}-w^*\|^2.
\end{split}
\end{equation}

Collecting all terms yields
\begin{equation}
\begin{split}
\mathbb{E}[V^{k+1}-V^k]
&\le
-
\mathbb{E}
\bigg[
\sum_{g\in A(k)}
\left(
\frac{m_g\mu_g}{M}-\rho
\right)
\|w_g^{k+1}-w^*\|^2
\bigg] \\
&\quad
+
\mathbb{E}
\bigg[
\sum_{g\in A(k)}
\frac{M}{m_g\mu_g}\|\xi_g^k\|^2
\bigg].
\end{split}
\label{eq:descent_ineq}
\end{equation}
Therefore, if
\[
\rho<\frac{m_g\mu_g}{M},
\]
the Lyapunov function has a strict expected descent up to the delay error term.

\subsection{Explicit Bound on the Delay Error}

We now bound the delay error term
\[
\sum_{g\in A(k)} \frac{M}{m_g\mu_g}\|\xi_g^k\|^2.
\]
From \eqref{eq:xi_def},
\begin{equation}
\xi_g^k
=
(\lambda_g^{k+1}-\lambda_g^{d_g(k)})
+
\rho(w^{d_g(k)}-w^{k+1}).
\end{equation}

\subsubsection{Bounding \(\|\xi_g^k\|^2\)}

Using \(\|a+b\|^2\le 2\|a\|^2+2\|b\|^2\), we obtain
\begin{equation}
\|\xi_g^k\|^2
\le
2\|\lambda_g^{k+1}-\lambda_g^{d_g(k)}\|^2
+
2\rho^2\|w^{d_g(k)}-w^{k+1}\|^2.
\label{eq:xi_sq_bound}
\end{equation}

We first bound the primal delay term. By telescoping,
\begin{equation}
w^{k+1}-w^{d_g(k)}
=
\sum_{j=d_g(k)}^k (w^{j+1}-w^j).
\end{equation}
Since the number of summands is at most \(\tau_{\max}+1\), the discrete Cauchy--Schwarz inequality gives
\begin{equation}
\left\|
\sum_{j=d_g(k)}^k (w^{j+1}-w^j)
\right\|^2
\le
(\tau_g(k)+1)\sum_{j=d_g(k)}^k \|w^{j+1}-w^j\|^2.
\end{equation}
Because the iterates remain in a bounded region, there exists a constant \(D>0\) such that
\(\|w^{j+1}-w^j\|^2\le D\). Hence,
\begin{align}
\|w^{k+1}-w^{d_g(k)}\|^2
&\le
(\tau_{\max}+1)\sum_{j=k-\tau_{\max}}^k D \nonumber\\
&\le
(\tau_{\max}+1)^2D.
\end{align}
Therefore, there exists a constant \(C_w>0\) such that
\begin{equation}
\|w^{k+1}-w^{d_g(k)}\|^2
\le
C_w\tau_{\max}^2.
\end{equation}

The same argument applies to the dual term. Using the unified dual update \eqref{eq:dual_update}, there exists a constant \(D_\lambda>0\) such that
\[
\|\lambda_g^{j+1}-\lambda_g^j\|^2 \le D_\lambda.
\]
Therefore,
\begin{equation}
\|\lambda_g^{k+1}-\lambda_g^{d_g(k)}\|^2
\le
C_\lambda\tau_{\max}^2,
\end{equation}
for some constant \(C_\lambda>0\).

Substituting these bounds into \eqref{eq:xi_sq_bound} gives
\begin{equation}
\begin{split}
\|\xi_g^k\|^2
&\le
2C_\lambda\tau_{\max}^2
+
2\rho^2C_w\tau_{\max}^2 \\
&=
2(C_\lambda+\rho^2C_w)\tau_{\max}^2.
\end{split}
\end{equation}

\subsubsection{Bounding the Weighted Sum}

Define
\[
K_\xi = 2(C_\lambda+\rho^2C_w).
\]
Then
\begin{equation}
\sum_{g\in A(k)} \frac{M}{m_g\mu_g}\|\xi_g^k\|^2
\le
\left(
\sum_{g\in A(k)} \frac{M}{m_g\mu_g}
\right)
K_\xi\tau_{\max}^2.
\end{equation}
Since \(A(k)\subseteq\{1,\dots,G\}\), this sum is uniformly bounded over all active sets. Define
\begin{equation}
C_{\mathrm{delay}}
=
K_\xi \cdot
\sup_{A(k)\subseteq\{1,\dots,G\}}
\left(
\sum_{g\in A(k)} \frac{M}{m_g\mu_g}
\right).
\end{equation}
Then
\begin{equation}
\sum_{g\in A(k)} \frac{M}{m_g\mu_g}\|\xi_g^k\|^2
\le
C_{\mathrm{delay}}\tau_{\max}^2.
\label{eq:xi_bound}
\end{equation}

\subsection{Final Convergence Theorem}

We are now ready to state the main result.

\begin{theorem}[Sublinear Convergence Rate with Bounded Delay]
\label{thm:convergence}
Suppose that Assumptions~\ref{assu:strong_convex}--\ref{assu:bounded_delay} hold. If the delay is bounded by \(\tau_{\max}\), then the algorithm achieves an \(\mathcal{O}(1/T)\) sublinear convergence rate to an error neighborhood. Specifically,
\begin{equation}
\frac{1}{T}\sum_{k=0}^{T-1}\sum_{g\in A(k)}
\mathbb{E}\big[\|w_g^{k+1}-w^*\|^2\big]
\le
\frac{\mathcal{O}(1)}{T}
+
\mathcal{O}(\tau_{\max}^2).
\end{equation}
Moreover, as \(T\to\infty\),
\begin{equation}
\lim_{T\to\infty}
\frac{1}{T}\sum_{k=0}^{T-1}\sum_{g\in A(k)}
\mathbb{E}\big[\|w_g^{k+1}-w^*\|^2\big]
\le
\mathcal{O}(\tau_{\max}^2).
\end{equation}
\end{theorem}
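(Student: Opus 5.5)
The plan is to telescope the expected descent inequality \eqref{eq:descent_ineq} and then substitute the uniform delay-error bound \eqref{eq:xi_bound}. Everything needed is already available: the dual expansion \eqref{eq:dual_expansion_final}, the strong-convexity coupling \eqref{eq:primal_bound}, the cancellation of the global error term after summing over groups, the bound on $\Delta V_{\mathrm{primal}}$, and Young's inequality. Together they give, for every $k$,
\[
\mathbb{E}[V^{k+1}-V^k]\le -\,\mathbb{E}\Big[\sum_{g\in A(k)}\Big(\tfrac{m_g\mu_g}{M}-\rho\Big)\|w_g^{k+1}-w^*\|^2\Big]+C_{\mathrm{delay}}\tau_{\max}^2 .
\]

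The first step fixes $\rho$ so that the descent coefficient is uniformly positive. I choose $\rho<\min_g m_g\mu_g/M$ and set
\[
c:=\min_g \frac{m_g\mu_g}{M}-\rho>0 .
\]
Replacing each group coefficient by $c$ gives $\mathbb{E}[V^{k+1}-V^k]\le -c\,\mathbb{E}[S_k]+C_{\mathrm{delay}}\tau_{\max}^2$, where $S_k=\sum_{g\in A(k)}\|w_g^{k+1}-w^*\|^2$.

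The second step sums from $k=0$ to $T-1$. The left-hand side telescopes to $\mathbb{E}[V^T]-V^0$. Since $V^T\ge 0$ as a sum of squared norms, this yields
\[
c\sum_{k=0}^{T-1}\mathbb{E}[S_k]\le V^0+T\,C_{\mathrm{delay}}\tau_{\max}^2 .
\]
Dividing by $cT$ gives
\[
\frac1T\sum_{k=0}^{T-1}\mathbb{E}[S_k]\le \frac{V^0}{cT}+\frac{C_{\mathrm{delay}}}{c}\tau_{\max}^2 .
\]
The first term is $\mathcal{O}(1)/T$ because $V^0$ depends only on the initialization ($\lambda_g^0=0$, $w^0$) and on the saddle point from Assumption~\ref{assu:strong_duality}. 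The second term is $\mathcal{O}(\tau_{\max}^2)$ because $C_{\mathrm{delay}}$ and $c$ do not depend on $T$.

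The third step is the limit statement. Letting $T\to\infty$ sends $V^0/(cT)\to 0$, so the $\limsup$ of the averages is at most $C_{\mathrm{delay}}\tau_{\max}^2/c$, which is the claimed bound; I read the displayed ``$\lim$'' as this $\limsup$ bound.

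I expect the main obstacle to be justifying that the constants are uniform in $k$ and $T$, not the telescoping itself. The bound \eqref{eq:xi_bound} relies on iterate increments being bounded by $D$ and $D_\lambda$, which the paper takes from the iterates remaining in a bounded region. It also relies on the relation $\sum_g\lambda_g^k=0$, which removes the global coupling term and makes $w^{k+1}$ equal to the average of the group variables. My first attempt would be to make these uniform-boundedness conditions explicit standing hypotheses of the theorem. The alternative, deriving them from the descent inequality itself, is circular, because the delay term is what prevents monotone decrease of $V^k$. With these hypotheses made explicit, the constants $C_w$, $C_\lambda$ and hence $C_{\mathrm{delay}}$ are fixed before summation, and the argument above goes through.
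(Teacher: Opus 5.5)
Your proposal is correct and is essentially the paper's own proof: sum \eqref{eq:descent_ineq} over $k=0,\dots,T-1$, replace the delay term by $C_{\mathrm{delay}}\tau_{\max}^2$ via \eqref{eq:xi_bound}, discard $\mathbb{E}[V^T]\ge 0$, and divide by $T$ times $C_{\mathrm{coeff}}=\min_g(m_g\mu_g/M-\rho)$, which is exactly your $c$. Reading the limit as a $\limsup$ and stating the bounded-increment condition behind $D$ and $D_\lambda$ as an explicit hypothesis are both more careful than the paper, which simply asserts that the iterates stay bounded; the relation $\sum_g\lambda_g^k=0$, however, need not be assumed, because summing \eqref{eq:dual_update} over $g$ and substituting \eqref{eq:global_update_closedform} gives $\sum_g\lambda_g^{k+1}=0$ for every $k$.
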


\begin{IEEEproof}
Summing \eqref{eq:descent_ineq} from \(k=0\) to \(T-1\) gives
\begin{equation}
\begin{alignedat}{1}
&\sum_{k=0}^{T-1}\mathbb{E}[V^{k+1}-V^k] \nonumber\\
&\le
-
\sum_{k=0}^{T-1}
\mathbb{E}
\Bigg[
\sum_{g\in A(k)}
\left(
\frac{m_g\mu_g}{M}-\rho
\right)
\|w_g^{k+1}-w^*\|^2
\Bigg] \nonumber\\
&\quad
+
\sum_{k=0}^{T-1}
\mathbb{E}
\Bigg[
\sum_{g\in A(k)}
\frac{M}{m_g\mu_g}\|\xi_g^k\|^2
\Bigg].
\end{alignedat}
\end{equation}
The left-hand side telescopes to \(\mathbb{E}[V^T]-V^0\). Using \eqref{eq:xi_bound}, we obtain
\begin{equation}
\begin{split}
\mathbb{E}[V^T]-V^0
&\le
-
\sum_{k=0}^{T-1}
\mathbb{E}
\bigg[
\sum_{g\in A(k)}
\left(
\frac{m_g\mu_g}{M}-\rho
\right)
\|w_g^{k+1}-w^*\|^2
\bigg] \\
&\quad
+
T C_{\mathrm{delay}}\tau_{\max}^2.
\end{split}
\end{equation}
Rearranging yields
\begin{equation}
\begin{alignedat}{1}
&\sum_{k=0}^{T-1}
\mathbb{E}
\bigg[
\sum_{g\in A(k)}
\left(
\frac{m_g\mu_g}{M}-\rho
\right)
\|w_g^{k+1}-w^*\|^2
\bigg] \nonumber\\
&\le
V^0-\mathbb{E}[V^T]+T C_{\mathrm{delay}}\tau_{\max}^2.
\end{alignedat}
\end{equation}
Since \(V^k\ge 0\), we have
\[
V^0-\mathbb{E}[V^T]\le V^0.
\]
Define
\[
C_{\mathrm{coeff}}
=
\min_g\left(\frac{m_g\mu_g}{M}-\rho\right)>0.
\]
Then
\begin{equation}
\begin{split}
\frac{1}{T}\sum_{k=0}^{T-1}
\mathbb{E}
\bigg[
\sum_{g\in A(k)} \|w_g^{k+1}-w^*\|^2
\bigg]
\le
\frac{V^0}{T C_{\mathrm{coeff}}}
+
\frac{C_{\mathrm{delay}}}{C_{\mathrm{coeff}}}\tau_{\max}^2.
\end{split}
\end{equation}
The first term gives the \(\mathcal{O}(1/T)\) rate. Letting \(T\to\infty\) removes this transient term and leaves the \(\mathcal{O}(\tau_{\max}^2)\) error neighborhood.
\end{IEEEproof}

\noindent\textit{Remark on exact minimization and stochastic implementation.}
The analysis above is established for an idealized deterministic version of P-GADMM, where each group subproblem is solved exactly. This allows the effect of bounded delay to be written explicitly. In the implemented algorithm, the group update is approximated by local SGD and edge aggregation, and therefore includes stochastic gradient noise. Under the standard bounded variance assumption, this approximation introduces an additional residual term in the final error bound. This term is additive and does not change the qualitative dependence of the bound on \(\tau_{\max}\). Therefore, the delay effect is still characterized by the same term involving \(\tau_{\max}\), while stochastic approximation introduces an additional error floor.

\section{Experiments}
\label{sec:experiments}

This section evaluates the implementation of P-GADMM in heterogeneous edge networks. The experiments are conducted using a discrete-event simulation framework implemented in PyTorch. In the simulator, group updates are produced by synchronized local SGD within each active group, followed by edge aggregation. Unlike evaluations that only count communication rounds, the simulator advances according to event completion time. Therefore, local computation delay, waiting overhead, and straggler blocking are included in the reported wall-clock time. This setting allows the comparison to reflect both optimization behavior and system-level delay.

\subsection{Experimental Setup}
\label{subsec:setup}

\textbf{(a) Experimental platform and learning tasks:}
The experiments are conducted on a Windows 11 workstation equipped with an Intel Core Ultra 7 265K CPU at 3.9 GHz and an NVIDIA GeForce RTX 5060 Ti GPU. The software environment is Python 3.10.20 with PyTorch 2.10.0+cu130. Two image classification tasks are considered, namely MNIST and CIFAR10. For MNIST, we use a lightweight convolutional neural network with two convolutional layers and two fully connected layers. The channel dimensions increase from 1 to 16 and from 16 to 32, and each convolutional layer is followed by a ReLU activation and a $2\times2$ max pooling layer. For CIFAR10, we use a deeper convolutional network with four convolutional blocks. The channel dimensions increase as $3\!\to\!64\!\to\!128\!\to\!256\!\to\!512$, followed by a classifier with dropout.

\textbf{(b) Heterogeneity settings and baselines:}
Both statistical heterogeneity and system heterogeneity are considered. Statistical heterogeneity is controlled by a Dirichlet distribution with parameter $\alpha$. We set $\alpha=100$ for a nearly homogeneous setting and $\alpha=0.1$ for a strongly heterogeneous setting. System heterogeneity is controlled by a Pareto distribution with shape parameter $a$. We set $a=2.0$ for a light-tailed delay profile and $a=1.1$ for a heavy-tailed delay profile with stronger straggler effects. Here, the IID setting denotes a nearly homogeneous data partition induced by a large Dirichlet parameter. For convenience, the setting $(\alpha=100, a=2.0)$ is referred to as the IID setting, and the setting $(\alpha=0.1, a=1.1)$ is referred to as the non-IID setting.

P-GADMM is compared with two representative ADMM baselines. Asynch-ADMM follows the asynchronous distributed ADMM method in \cite{Chang2016Async}, where the master updates after receiving information from only a subset of workers under a bounded delay condition. GADMM follows the group-based ADMM method in \cite{Wang2017GADMM}, where a group layer is introduced between workers and the master, and the global variable is updated through group variables rather than directly through all local variables. These two baselines represent two common ways to improve the efficiency of distributed ADMM \cite{Boyd2011}. Asynch-ADMM relaxes full synchronization, whereas GADMM reduces the number of variables involved in each synchronization step. By contrast, P-GADMM combines computation-aware grouping with bounded asynchronous updates.

For a controlled comparison, all methods use the same dataset, model architecture, data partition, client computation profile, batch size, local epoch number, and random seeds under each setting. The same discrete-event simulator is used for all methods, so the reported wall-clock time is computed from the same local computation and waiting-time model. For grouped methods, the group size is kept identical unless otherwise stated. GADMM uses the same number of groups as P-GADMM, but does not use estimated training latency for grouping. Therefore, the comparison separates the effect of computation-aware grouping from the effect of using a grouped ADMM structure.

\textbf{(c) Implementation details and evaluation protocol:}
Unless otherwise stated, the number of clients is 50, the group size is 10, the learning rate is 0.01, the number of local epochs is 1, and the batch size is 64. For P-GADMM, the staleness threshold is set to $\tau_{\max}=1$. The penalty parameter is set to $\rho=1.0$ in the IID setting and $\rho=0.001$ in the non-IID setting. For Asynch-ADMM, the parameters are set to $\rho=500$, $\tau_{\max}=20$, and $\gamma=0$. These algorithm-specific parameters are kept fixed across repeated runs under the same heterogeneity setting.

Each setting is repeated three times with different random seeds. The same set of seeds is used for all methods. Performance is evaluated using five metrics: the number of rounds required to reach a target accuracy, the average waiting time per round, the straggler blocking ratio, the final test accuracy, and the wall-clock time to reach the target accuracy. The line plots show representative convergence trajectories under the same seed for all methods, whereas the tables report the mean $\pm$ standard deviation over the repeated runs.

If a method reaches the target accuracy in all repeated runs, the reported time is the time to target. If a method fails to reach the target accuracy within the maximum simulation budget, the reported time corresponds to the total simulation time, and the rounds entry is marked accordingly. When only part of the repeated runs reach the target accuracy, the table notes the number of successful runs. This convention is used to distinguish successful convergence to the target from runs that terminate at the simulation budget.

\subsection{Performance Comparison}
\label{subsec:main_results}

We first compare P-GADMM with the two baselines under both the IID and non-IID settings. To ensure a controlled comparison, all methods use the same data partition, latency model, network architecture, and optimization setting. The only difference lies in the coordination mechanism.

\begin{table*}[!t]
\centering
\caption{Experimental results on MNIST under IID and non-IID settings.}
\label{tab:mnist_stats}
\footnotesize
\setlength{\tabcolsep}{6pt}
\renewcommand{\arraystretch}{1.15}
\begin{threeparttable}
\begin{tabular}{lccccc}
\toprule
\multicolumn{6}{c}{\textbf{IID setting} ($\alpha=100$, $a=2.0$, target accuracy $95\%$)} \\
\midrule
Method & Target rounds & Avg. wait/round (s) & Blocking (\%) & Final acc. (\%) & Time to target (s) \\
\midrule
Asynch-ADMM & $206.7 \pm 66.0$ & $0.1446 \pm 0.1165$ & $8.33 \pm 2.36$ & $96.48 \pm 0.39$ & $26.2449 \pm 17.0137$ \\
GADMM       & $68.3 \pm 6.2$ & $1.6681 \pm 1.0538$ & $100.00 \pm 0.00$ & $97.39 \pm 0.51$ & $117.4762 \pm 82.5656$ \\
P-GADMM     & $113.3 \pm 9.4$ & $0.0350 \pm 0.0017$ & $13.60 \pm 5.56$ & $98.00 \pm 0.26$ & $3.9937 \pm 0.1216$ \\
\midrule
\multicolumn{6}{c}{\textbf{non-IID setting} ($\alpha=0.1$, $a=1.1$, target accuracy $80\%$)} \\
\midrule
Method & Target rounds & Avg. wait/round (s) & Blocking (\%) & Final acc. (\%) & Time to target (s) \\
\midrule
Asynch-ADMM & $90.0^{\dagger}$ & $0.3657 \pm 0.1029$ & $8.33 \pm 2.36$ & $83.70 \pm 3.79$ & $25.2723 \pm 3.7539$ \\
GADMM       & $>500^{\ddagger}$ & $4.4898 \pm 0.7215$ & $100.00 \pm 0.00$ & $51.15 \pm 6.80$ & $2244.9162 \pm 360.7430$ \\
P-GADMM     & $53.3 \pm 8.5$ & $0.0468 \pm 0.0049$ & $24.27 \pm 7.72$ & $94.77 \pm 0.27$ & $2.5739 \pm 0.5158$ \\
\bottomrule
\end{tabular}
\begin{tablenotes}[flushleft]
\item Mean $\pm$ standard deviation over three runs.
\item[$\dagger$] Target reached in 2 of 3 runs.
\item[$\ddagger$] Target not reached within 500 rounds; time is the total simulation time.
\end{tablenotes}
\end{threeparttable}
\end{table*}

\begin{figure}[!htbp]
    \centering
    \begin{minipage}[t]{0.49\columnwidth}
        \centering
        \includegraphics[width=\linewidth,trim=8 4 6 4,clip]{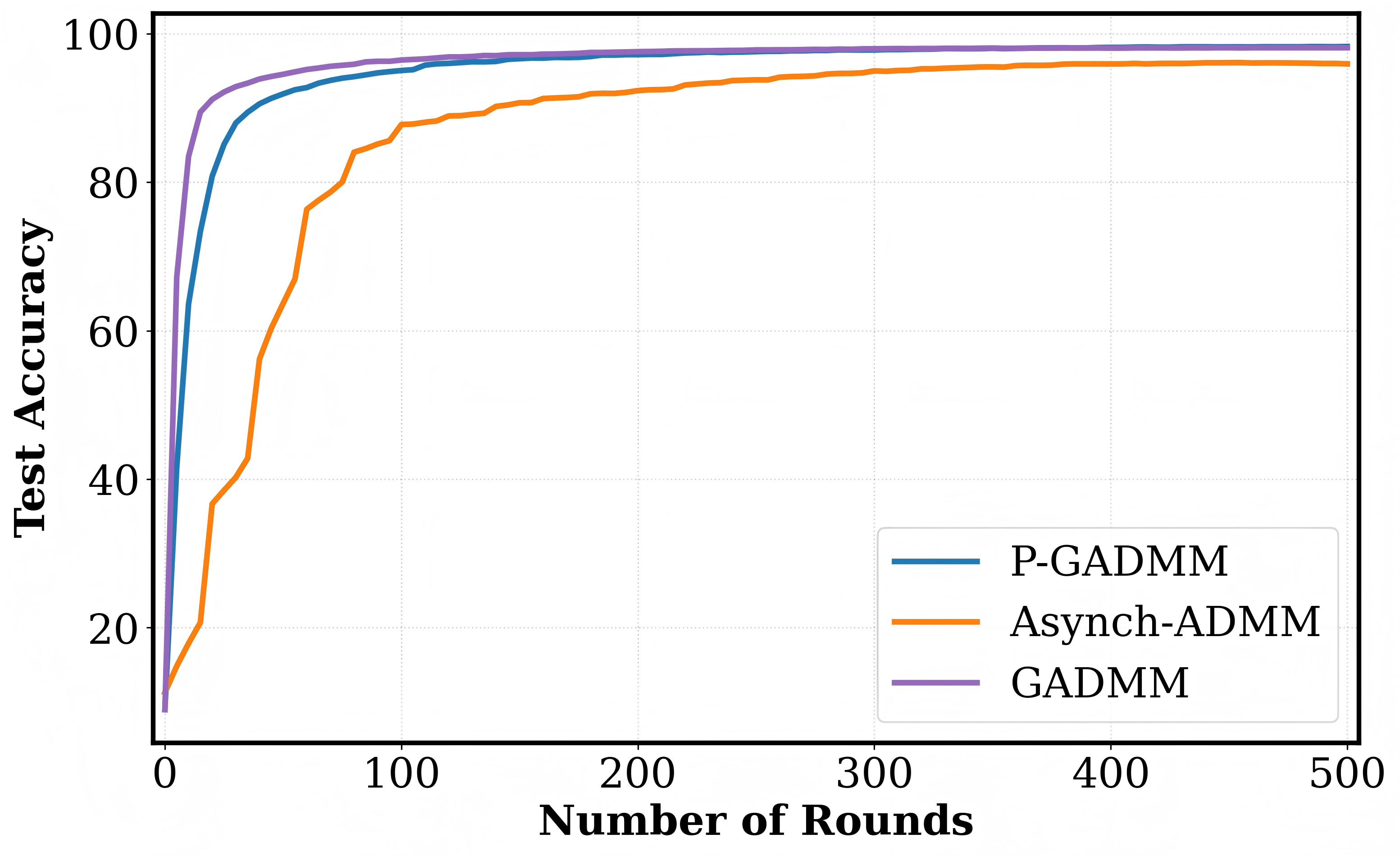}
        \centerline{\footnotesize (a)}
    \end{minipage}
    \hfill
    \begin{minipage}[t]{0.49\columnwidth}
        \centering
        \includegraphics[width=\linewidth,trim=8 4 6 4,clip]{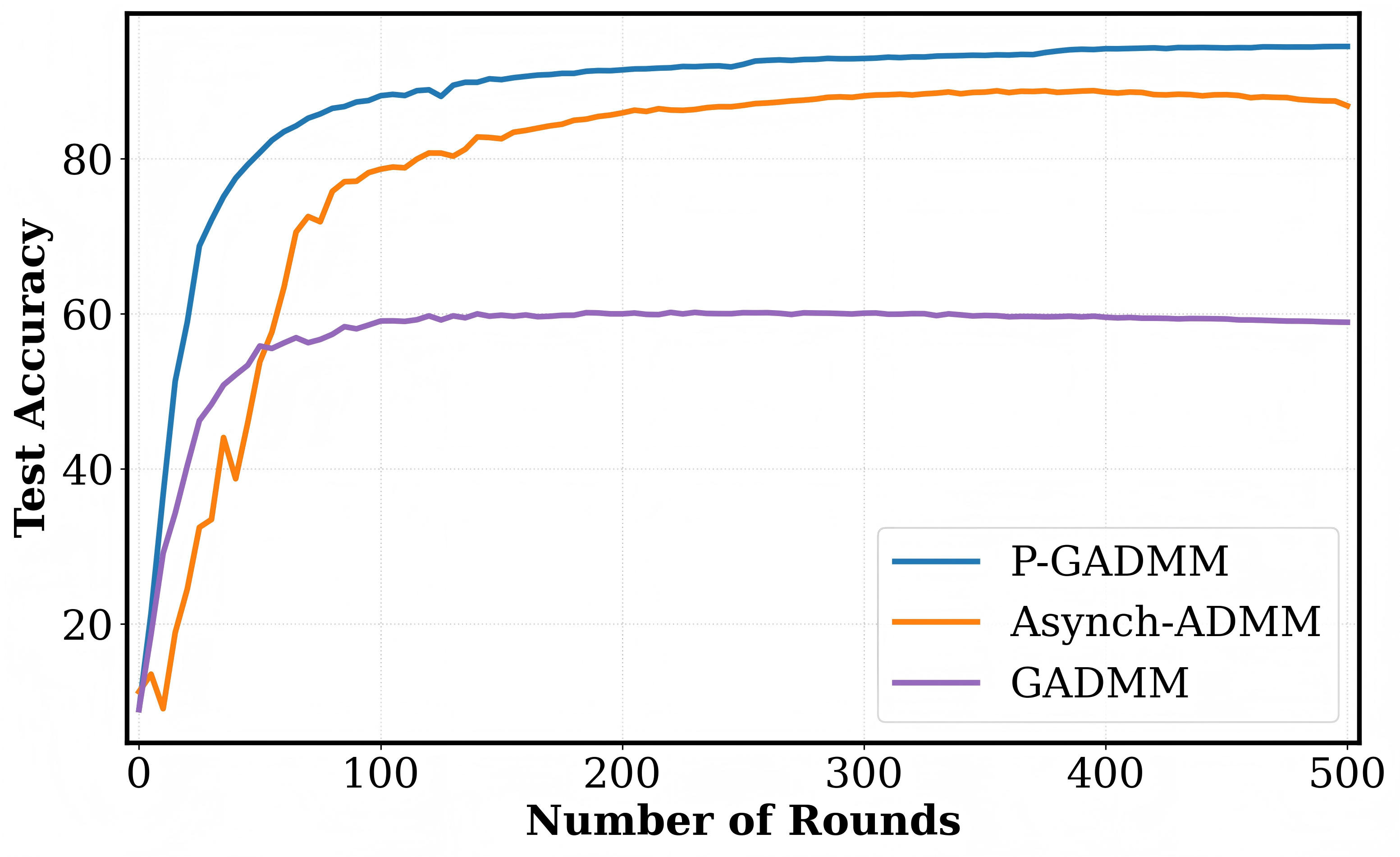}
        \centerline{\footnotesize (b)}
    \end{minipage}
    \caption{Convergence performance on MNIST. (a) IID setting. (b) non-IID setting.}
    \label{fig:mnist_rounds}
\end{figure}

The performance comparison on MNIST is shown in Fig.~\ref{fig:mnist_rounds} and Table~\ref{tab:mnist_stats}. From the reported results, several observations can be made. Under the IID setting, GADMM reaches the target accuracy in the fewest rounds, but its average waiting time per round is much larger than that of the other methods, and its blocking ratio stays at $100.00\%$. As a result, its time to target is much longer than that of P-GADMM. By comparison, P-GADMM reaches the target accuracy in $113.3 \pm 9.4$ rounds and $3.9937 \pm 0.1216$~s, while also achieving the highest final accuracy, $98.00 \pm 0.26\%$. Asynch-ADMM requires $206.7 \pm 66.0$ rounds and $26.2449 \pm 17.0137$~s. These results indicate that, even when the round count of GADMM appears smaller, the practical efficiency of P-GADMM is still better because the waiting overhead is much lower.

Under the non-IID setting, the advantage of P-GADMM becomes more pronounced. P-GADMM reaches the target accuracy of $80\%$ in $53.3 \pm 8.5$ rounds and $2.5739 \pm 0.5158$~s. Asynch-ADMM reaches the same target in only 2 of 3 runs and requires $25.2723 \pm 3.7539$~s, while GADMM does not reach the target within 500 rounds. In addition, P-GADMM achieves the highest final accuracy, $94.77 \pm 0.27\%$. Therefore, on MNIST, P-GADMM provides lower wall-clock training time while maintaining comparable or higher final accuracy in these settings.

\begin{table*}[!t]
\centering
\caption{Experimental results on CIFAR10 under IID and non-IID settings.}
\label{tab:cifar_stats}
\footnotesize
\setlength{\tabcolsep}{6pt}
\renewcommand{\arraystretch}{1.15}
\begin{threeparttable}
\begin{tabular}{lccccc}
\toprule
\multicolumn{6}{c}{\textbf{IID setting} ($\alpha=100$, $a=2.0$, target accuracy $75\%$)} \\
\midrule
Method & Target rounds & Avg. wait/round (s) & Blocking (\%) & Final acc. (\%) & Time to target (s) \\
\midrule
Asynch-ADMM & $990.0 \pm 289.9$ & $0.0488 \pm 0.0038$ & $8.33 \pm 2.36$ & $79.44 \pm 1.75$ & $49.3355 \pm 18.4516$ \\
GADMM       & $>2000^{\dagger}$ & $0.6674 \pm 0.2900$ & $100.00 \pm 0.00$ & $52.19 \pm 1.64$ & $1334.8499 \pm 580.0759$ \\
P-GADMM     & $886.7 \pm 198.7$ & $0.0350 \pm 0.0021$ & $10.92 \pm 5.13$ & $81.27 \pm 1.62$ & $31.4334 \pm 8.8425$ \\
\midrule
\multicolumn{6}{c}{\textbf{non-IID setting} ($\alpha=0.1$, $a=1.1$, target accuracy $60\%$)} \\
\midrule
Method & Target rounds & Avg. wait/round (s) & Blocking (\%) & Final acc. (\%) & Time to target (s) \\
\midrule
Asynch-ADMM & $963.3 \pm 324.6$ & $0.2545 \pm 0.0155$ & $8.33 \pm 2.36$ & $67.91 \pm 3.69$ & $243.7668 \pm 81.0247$ \\
GADMM       & $>2000^{\dagger}$ & $3.3781 \pm 1.1568$ & $100.00 \pm 0.00$ & $23.14 \pm 3.89$ & $6756.2943 \pm 2313.6207$ \\
P-GADMM     & $1070.0 \pm 215.2$ & $0.0475 \pm 0.0025$ & $13.10 \pm 3.75$ & $66.69 \pm 3.66$ & $50.8783 \pm 11.1394$ \\
\bottomrule
\end{tabular}
\begin{tablenotes}[flushleft]
\item Mean $\pm$ standard deviation over three runs.
\item[$\dagger$] Target not reached within 2000 rounds; time is the total simulation time.
\end{tablenotes}
\end{threeparttable}
\end{table*}

\begin{figure}[!htbp]
    \centering
    \begin{minipage}[t]{0.49\columnwidth}
        \centering
        \includegraphics[width=\linewidth,trim=8 4 6 4,clip]{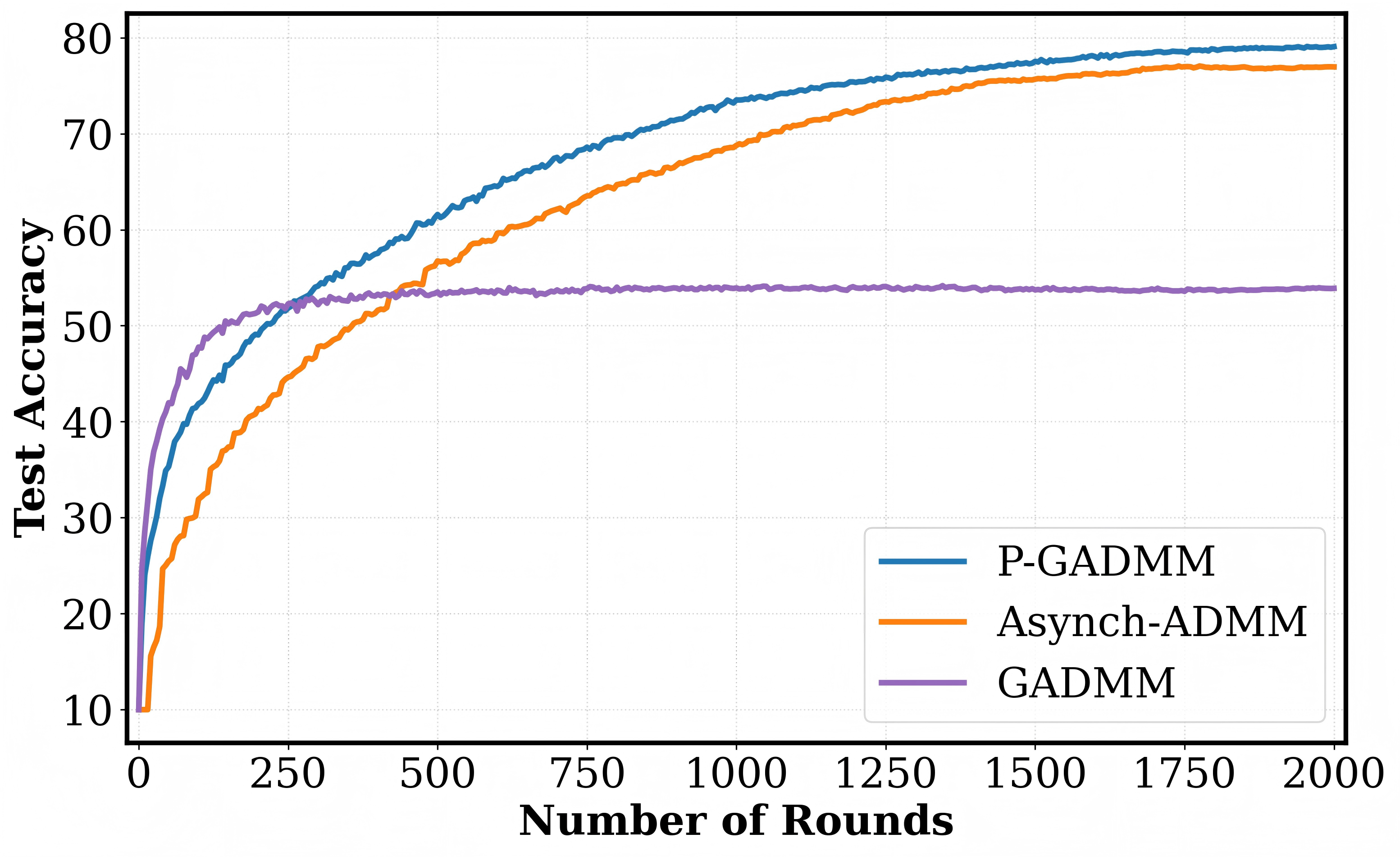}
        \centerline{\footnotesize (a)}
    \end{minipage}
    \hfill
    \begin{minipage}[t]{0.49\columnwidth}
        \centering
        \includegraphics[width=\linewidth,trim=8 4 6 4,clip]{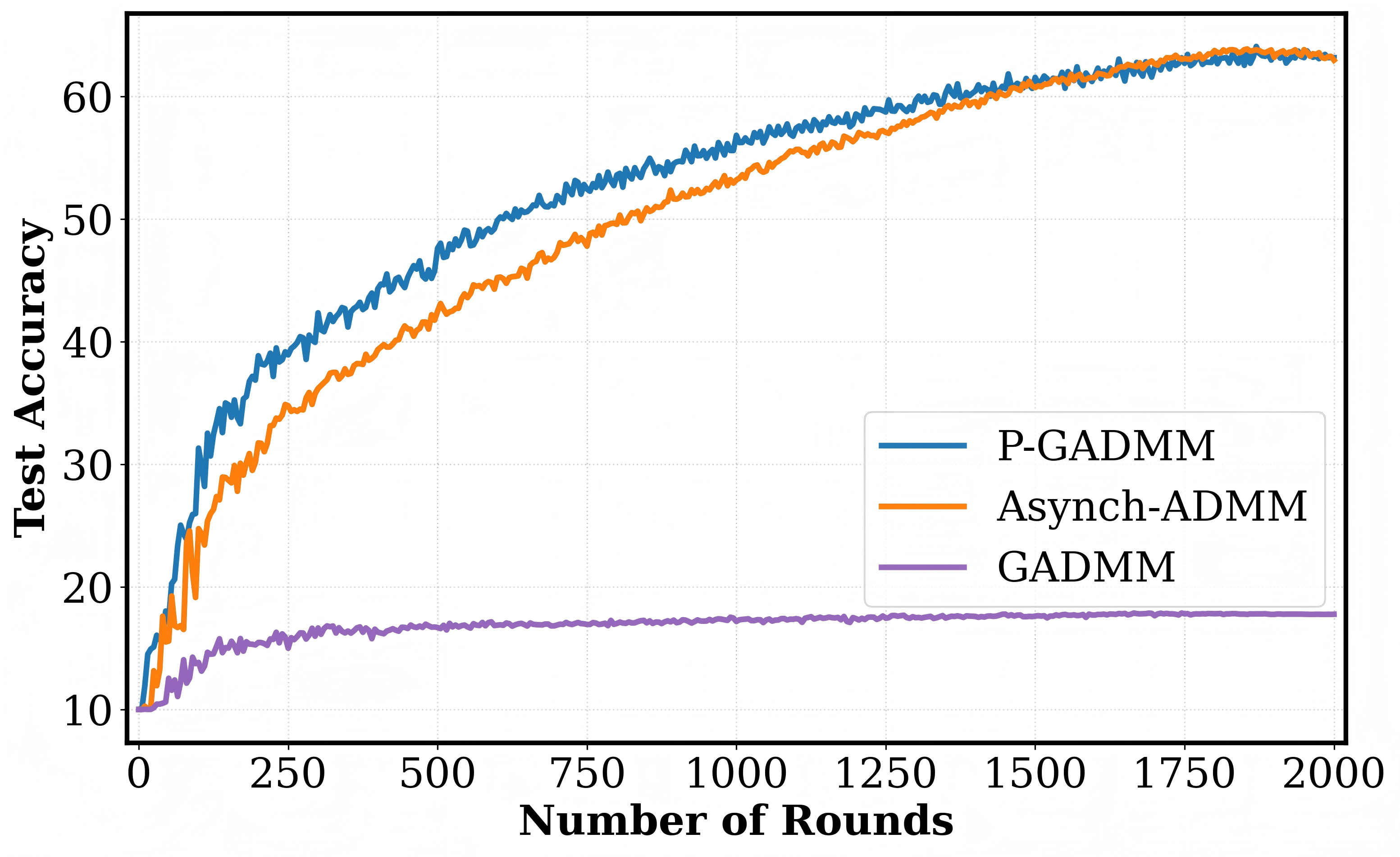}
        \centerline{\footnotesize (b)}
    \end{minipage}
    \caption{Convergence performance on CIFAR10. (a) IID setting. (b) non-IID setting.}
    \label{fig:cifar_rounds}
\end{figure}

\begin{figure*}[!t]
    \centering
    \begin{minipage}[t]{0.32\textwidth}
        \centering
        \includegraphics[width=\linewidth,trim=8 4 6 4,clip]{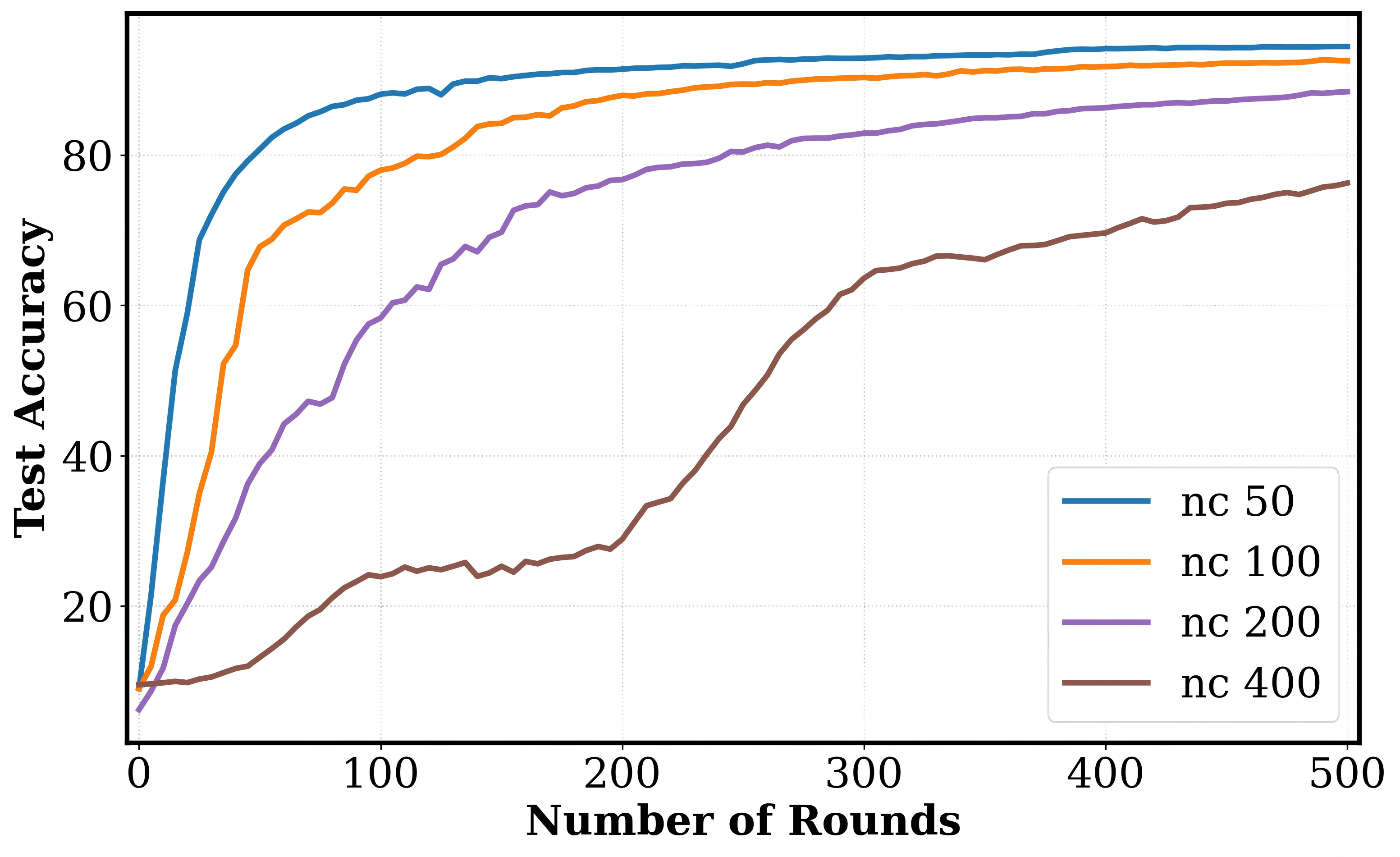}
        \centerline{\footnotesize (a)}
    \end{minipage}
    \hfill
    \begin{minipage}[t]{0.32\textwidth}
        \centering
        \includegraphics[width=\linewidth,trim=8 4 6 4,clip]{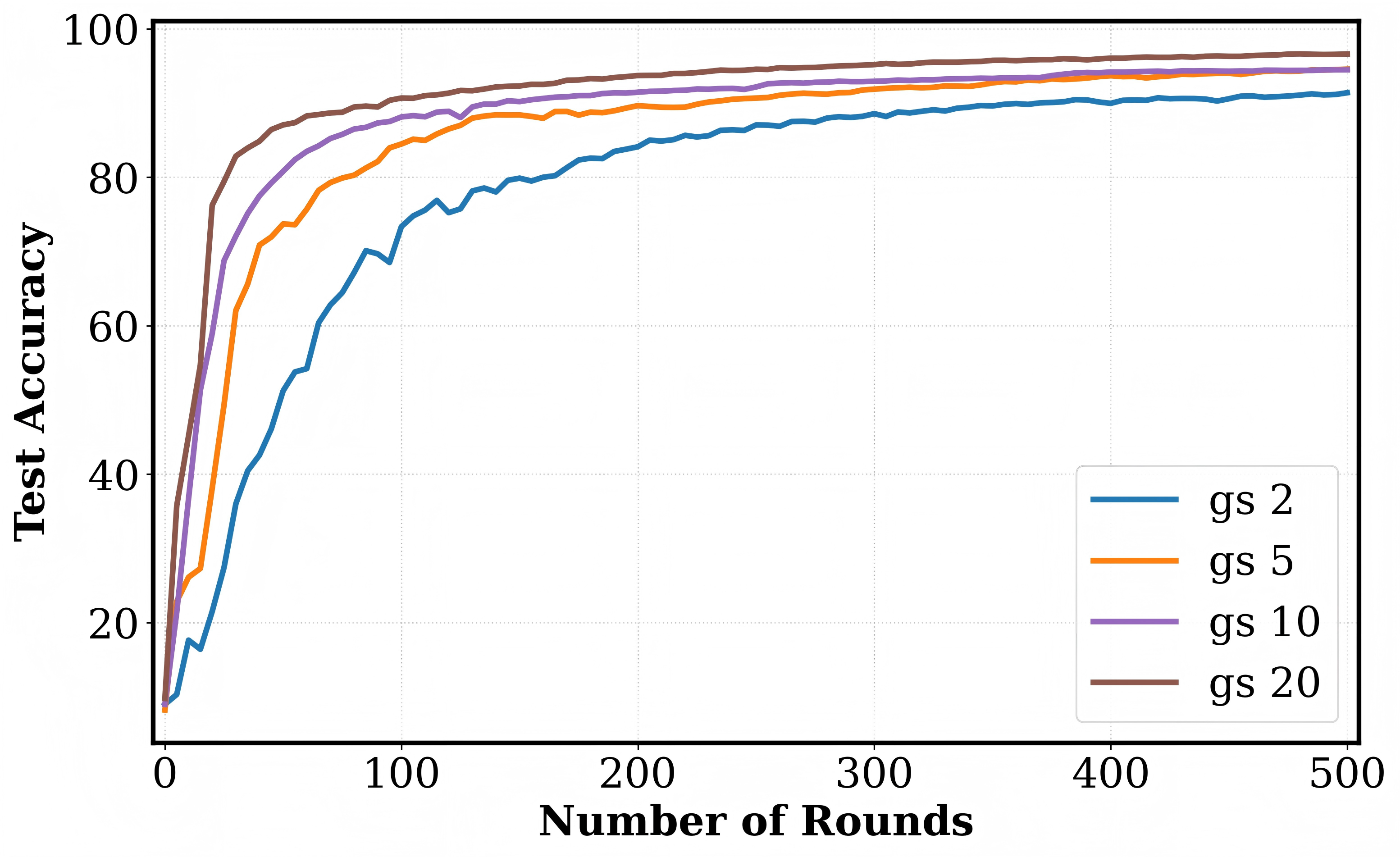}
        \centerline{\footnotesize (b)}
    \end{minipage}
    \hfill
    \begin{minipage}[t]{0.32\textwidth}
        \centering
        \includegraphics[width=\linewidth,trim=8 4 6 4,clip]{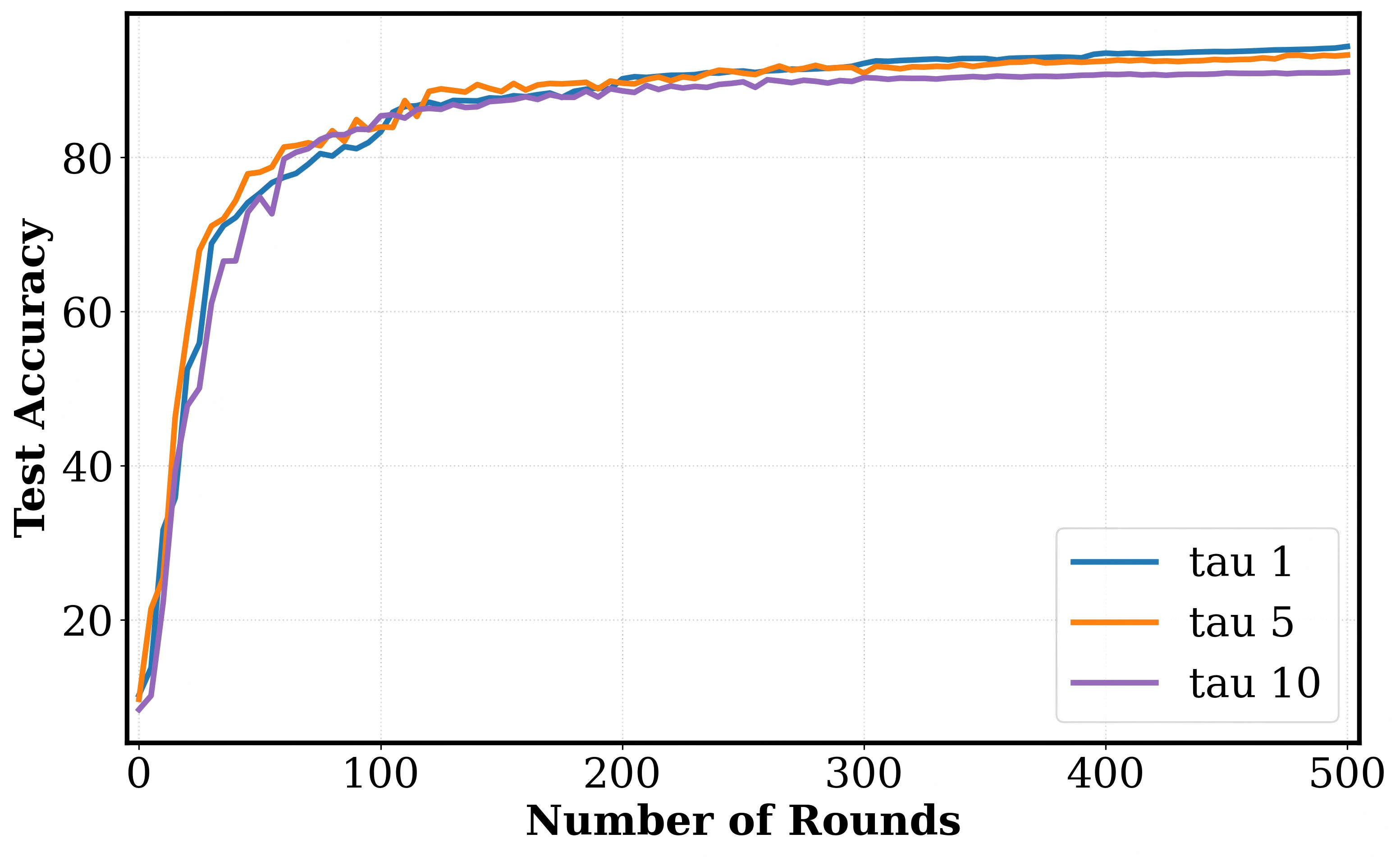}
        \centerline{\footnotesize (c)}
    \end{minipage}
    \caption{Sensitivity analysis of P-GADMM on MNIST under the non-IID setting. (a) Client population. (b) Group size. (c) Staleness threshold.}
    \label{fig:mnist_sensitivity}
\end{figure*}

\begin{figure*}[!t]
    \centering
    \begin{minipage}[t]{0.32\textwidth}
        \centering
        \includegraphics[width=\linewidth,trim=8 4 6 4,clip]{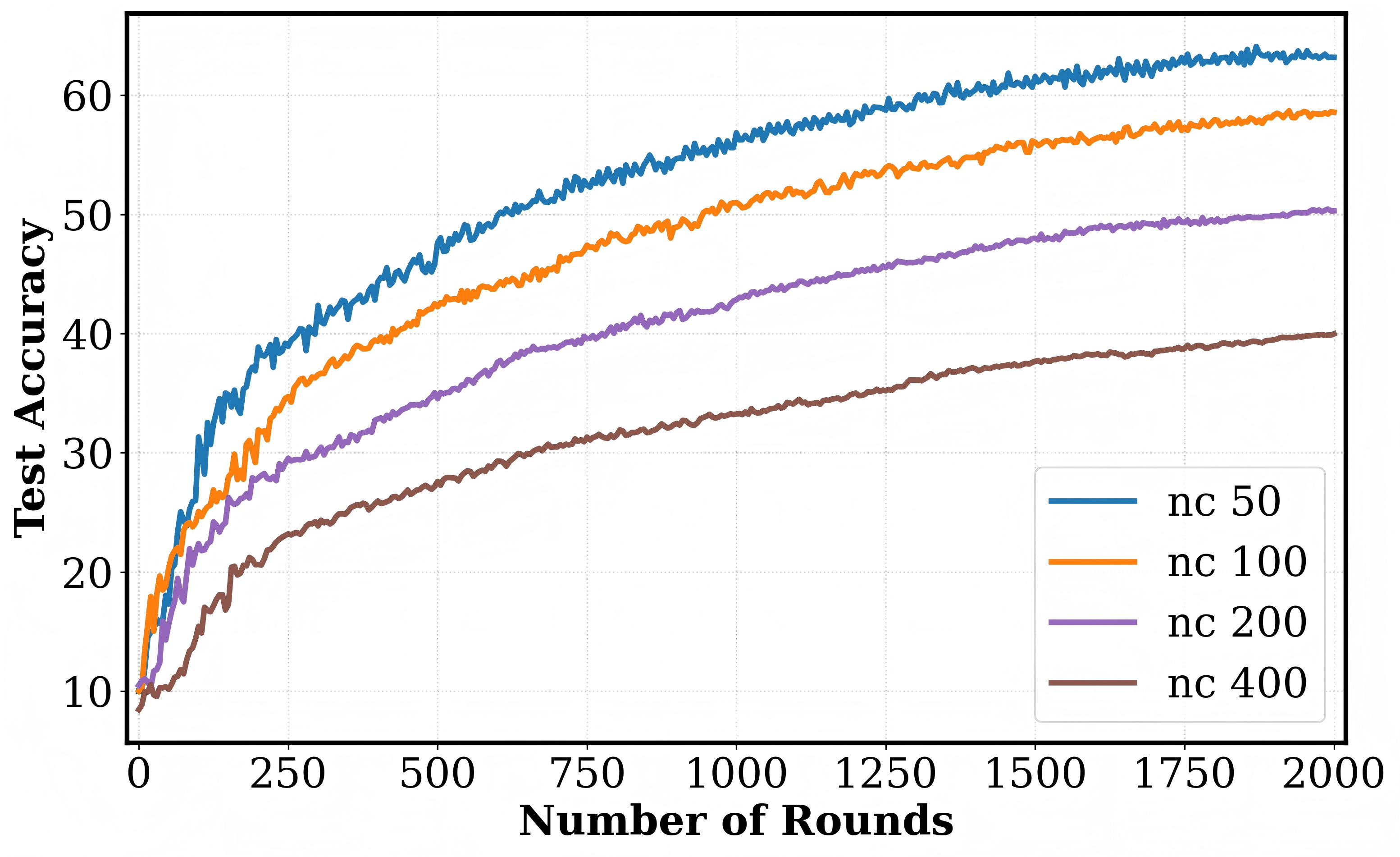}
        \centerline{\footnotesize (a)}
    \end{minipage}
    \hfill
    \begin{minipage}[t]{0.32\textwidth}
        \centering
        \includegraphics[width=\linewidth,trim=8 4 6 4,clip]{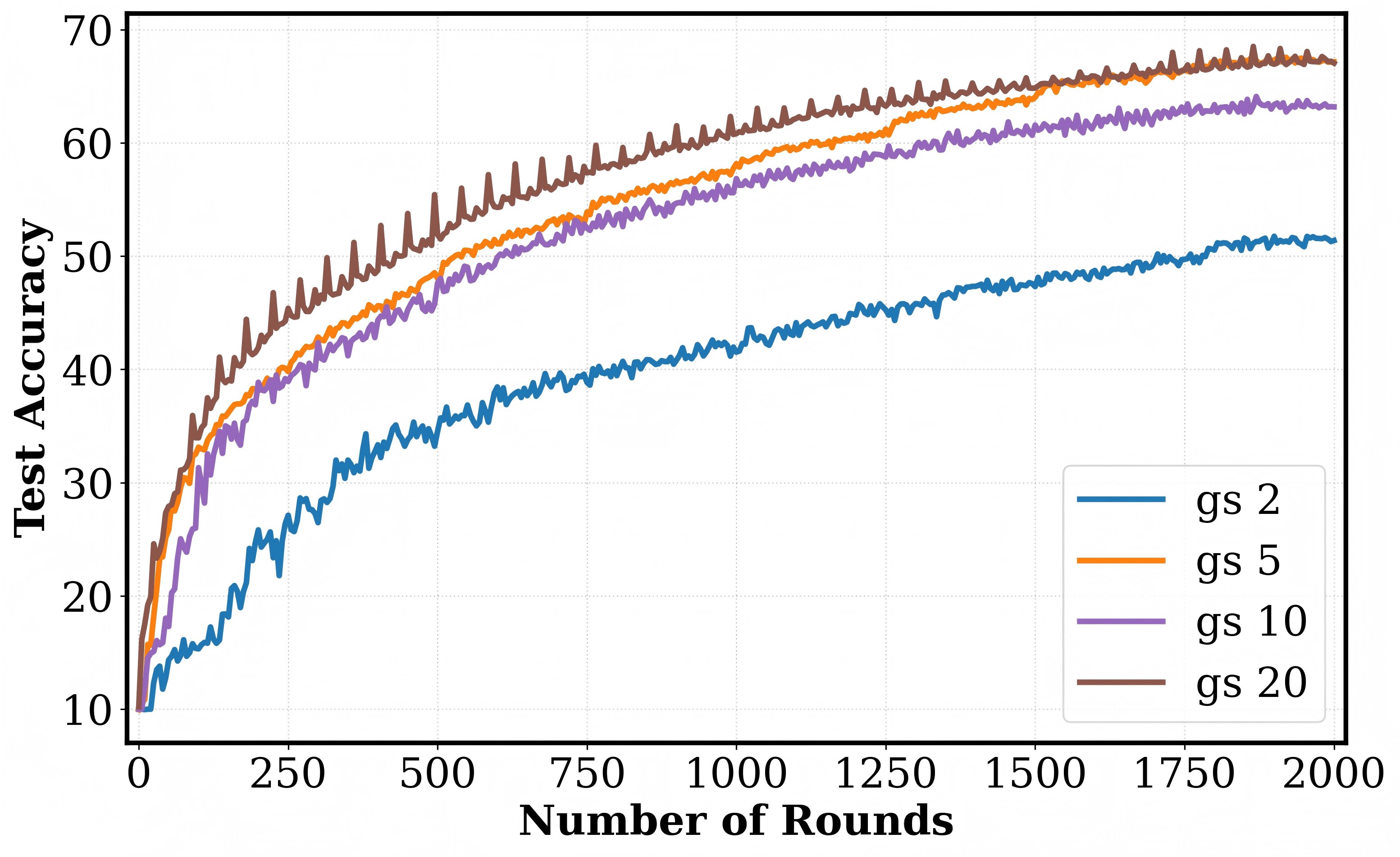}
        \centerline{\footnotesize (b)}
    \end{minipage}
    \hfill
    \begin{minipage}[t]{0.32\textwidth}
        \centering
        \includegraphics[width=\linewidth,trim=8 4 6 4,clip]{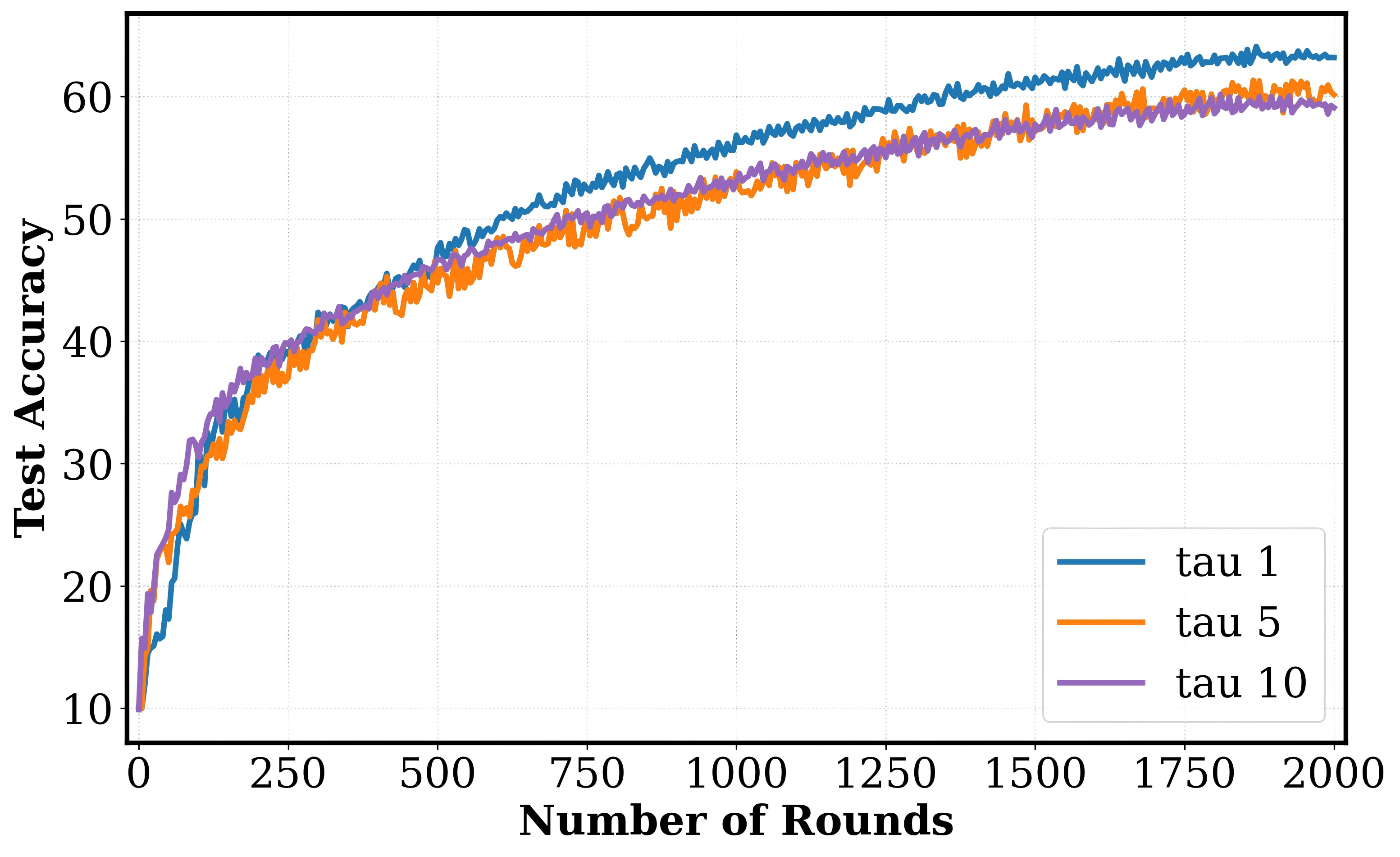}
        \centerline{\footnotesize (c)}
    \end{minipage}
    \caption{Sensitivity analysis of P-GADMM on CIFAR10 under the non-IID setting. (a) Client population. (b) Group size. (c) Staleness threshold.}
    \label{fig:cifar_sensitivity}
\end{figure*}
The performance comparison on CIFAR10 is shown in Fig.~\ref{fig:cifar_rounds} and Table~\ref{tab:cifar_stats}. Under the IID setting, P-GADMM reaches the target accuracy of $75\%$ in $886.7 \pm 198.7$ rounds and $31.4334 \pm 8.8425$~s. Asynch-ADMM requires more rounds and more time to target, while GADMM does not reach the target within 2000 rounds and ends with much lower final accuracy. In this case, P-GADMM not only achieves the shortest time to target, but also reaches the highest final accuracy among the three methods.

Under the non-IID setting, both Asynch-ADMM and P-GADMM reach the target accuracy of $60\%$. However, the difference in time to target remains large. Asynch-ADMM requires $243.7668 \pm 81.0247$~s, whereas P-GADMM requires only $50.8783 \pm 11.1394$~s. Although the final accuracy of P-GADMM, $66.69 \pm 3.66\%$, is slightly lower than that of Asynch-ADMM, $67.91 \pm 3.69\%$, the reduction in training time is still substantial. GADMM again fails to reach the target and remains far behind in both accuracy and time. These results suggest that, on the more challenging CIFAR10 task, the main strength of P-GADMM is its ability to reduce actual training time while maintaining competitive learning performance.

From the results on MNIST and CIFAR10, a consistent trend can be observed. First, P-GADMM does not always use the fewest rounds, but it consistently reduces actual training time. Second, the repeated runs show that the method remains stable across different random seeds. Third, the comparison with the two baselines clarifies the role of the proposed design. GADMM reduces the number of variables involved in synchronization, but its progress is still constrained by slow groups. Asynch-ADMM relaxes synchronization, but its advantage becomes weaker when stale updates accumulate. By combining latency-aware grouping with bounded asynchronous updates, P-GADMM achieves a better balance between efficiency and model quality.

\subsection{Sensitivity Analysis}
\label{subsec:sensitivity}

We next study the effects of client population, group size, and staleness threshold under the non-IID setting. To ensure a controlled comparison, the heterogeneity level is fixed and only one factor is changed at a time. The corresponding results are shown in Fig.~\ref{fig:mnist_sensitivity}, Fig.~\ref{fig:cifar_sensitivity}, and Table~\ref{tab:tau}. These results further illustrate how the main design parameters affect the behavior of P-GADMM.

We first examine the effect of client population. Fig.~\ref{fig:mnist_sensitivity}(a) and Fig.~\ref{fig:cifar_sensitivity}(a) show that convergence generally improves as the number of clients increases. When more clients participate, more groups can produce updates in parallel, and the cloud server can receive aggregated information more frequently. This trend is observed on both datasets. Therefore, the proposed framework can benefit from increased parallelism when more client resources are available.

We then examine the effect of group size. Fig.~\ref{fig:mnist_sensitivity}(b) and Fig.~\ref{fig:cifar_sensitivity}(b) show that larger group sizes lead to slower convergence. When the total number of clients is fixed, increasing the group size reduces the number of groups, which lowers the frequency of group updates received by the cloud. In addition, a larger group is more likely to include slow clients, which increases the chance of delayed group updates. This result is consistent with the design motivation of computation-aware grouping.

\begin{table}[!htbp]
\centering
\caption{Sensitivity analysis with different $\tau_{\max}$ under the non-IID setting.}
\label{tab:tau}
\footnotesize
\setlength{\tabcolsep}{8pt}
\renewcommand{\arraystretch}{1.12}
\begin{threeparttable}
\begin{tabular}{lccc}
\toprule
Dataset & $\tau_{\max}$ & Final acc. (\%) & Time (s) \\
\midrule
\multirow{3}{*}{MNIST}
& 1  & $94.75 \pm 0.37$ & $25.23 \pm 0.49$ \\
& 5  & $92.01 \pm 1.84$ & $23.59 \pm 3.32$ \\
& 10 & $92.08 \pm 0.85$ & $22.34 \pm 2.16$ \\
\midrule
\multirow{3}{*}{CIFAR10}
& 1  & $66.55 \pm 4.48$ & $94.92 \pm 6.20$ \\
& 5  & $60.15 \pm 1.43$ & $93.63 \pm 9.46$ \\
& 10 & $57.38 \pm 1.51$ & $96.52 \pm 12.65$ \\
\bottomrule
\end{tabular}
\end{threeparttable}
\end{table}

We finally examine the effect of the staleness threshold. Table~\ref{tab:tau}, together with Fig.~\ref{fig:mnist_sensitivity}(c) and Fig.~\ref{fig:cifar_sensitivity}(c), shows the results under repeated runs. From the table and figures, two observations can be made. First, on both MNIST and CIFAR10, the highest final accuracy is achieved when $\tau_{\max}=1$. On MNIST, the final accuracy reaches $94.75 \pm 0.37\%$, while on CIFAR10 it reaches $66.55 \pm 4.48\%$. Second, although the simulation time changes only slightly across different thresholds, the accuracy generally decreases as $\tau_{\max}$ increases. These results indicate that, in our setting, a small staleness threshold is preferable. In particular, $\tau_{\max}=1$ provides the best accuracy without causing a clear increase in training time.

\section{Conclusion}
\label{sec:conclusion}

In this paper, we proposed P-GADMM for distributed optimization in heterogeneous edge networks. The method combines computation-aware grouping, edge-level aggregation, and bounded asynchronous coordination. By grouping clients according to their estimated training latency, P-GADMM reduces the computation-speed variation within each group. The bounded asynchronous update further allows active groups to participate in global aggregation without waiting for slower groups. For strongly convex objectives, we established convergence guarantees for an idealized form of P-GADMM and characterized the effects of bounded staleness on the convergence behavior. Experiments on MNIST and CIFAR10 showed that P-GADMM reduces wall-clock training time compared with representative baselines while maintaining comparable final accuracy under both IID and non-IID settings. Future work will extend the analysis to the stochastic implementation and study adaptive grouping and coordination under dynamic network conditions.

\end{document}